\documentclass{amsart}
\usepackage{amsmath, amsthm}
\usepackage{amssymb, amscd, latexsym, graphicx, psfrag}
\usepackage[all,cmtip]{xy}
\usepackage{hyperref}
\hypersetup{colorlinks=true, linkcolor=black}
\numberwithin{equation}{section}
\newtheorem{Prop}{Proposition}
\newtheorem{Thm}{Theorem}
\newtheorem{Lemma}{Lemma}
\newtheorem{Cor}{Corollary}
\theoremstyle{definition}
\newtheorem{Def}{Definition}
\theoremstyle{remark}
\newtheorem{Rem}{Remark}
\theoremstyle{definition}

\title{Frobenius structures on double Hurwitz spaces}
\author{Stefano Romano}
\date{}
\begin{document}
\begin{abstract} We construct Frobenius structures of ``dual type" on the moduli space of ramified coverings of $\mathbb{P}^1$ with given ramification type over two points, generalizing Dubrovin's construction of \cite{Du-Diff, Du-2dtft}. A complete hierarchy of hydrodynamic type is obtained from the corresponding deformed flat connection. This provides a suitable framework for the Whitham theory of an enlarged class of integrable hierarchies; we treat as examples the q-deformed Gelfand-Dickey hierarchy and the sine-Gordon equation, and compute the corresponding solutions of the WDVV equations.
\end{abstract}
\maketitle
\section*{Introduction}
Since the discovery in the seventies \cite{Nov-Per, DuMaNo, Kr0} of algebro-geometric solutions of soliton equations, the notion of ``spectral curve" (roughly speaking, a Riemann surface with a marked meromorphic function) proved to play a central role in the theory of integrable systems. For any commuting hierarchy of PDEs of KdV type, the algebro-geometric machinery produces a rich supply of exact quasi-periodic solutions built out of theta functions on the Jacobian of a spectral curve. These solutions come in families, parametrized by a suitable moduli space of spectral curves; we loosely refer to the latter as a ``Hurwitz space".\\
\indent The existence of Frobenius structures \cite{Du-2dtft} on Hurwitz spaces may be viewed as the cornerstone of the connection between Whitham theory of slow modulation of soliton lattices and topological field theory \cite{Du-With, Kr01, Kr1}: on one hand, Frobenius structures on moduli spaces of quasi-periodic solutions are directly inherited from the bi-hamiltonian structure of the original hierarchy through a procedure of ``Whitham-averaging" over the invariant tori \cite{DuNo-Hydro}; on the other hand, the notion of Frobenius manifold itself represents the geometric axiomatization of the mathematical structure of topological field theories.\\
\indent According to the standard definition, Hurwitz spaces are moduli spaces of ramified coverings of $\mathbb{P}^1$ of a given genus and given ramification type over the point $\infty\in\mathbb{P}^1$ (plus possibly some additional discrete data). A complete construction of Frobenius structures on these spaces was established by Dubrovin in \cite{Du-Diff, Du-2dtft} (and further developed by Shramchenko \cite{Sch1, Sch2}); it involves the choice of a ``quasi-momentum differential", which is the analogue of Saito's primitive form in the framework of unfoldings of isolated hypersurface singularities \cite{Sa-Prim}. In our context, it is a certain moduli-dependent Abelian differential on the spectral curve (possibly multi-valued) with singularities only at the points lying over $\infty$.\\

In this work we consider a more general class of Hurwitz spaces, allowing the ramification of the cover to be assigned over two points, $0$ and $\infty$; we call them ``double" Hurwitz spaces. They are closely related to the so-called ``double-ramification cycles", which in recent years proved to be important characters in the study of the tautological ring of $\overline{\mathcal{M}}_{g,n}$. Our interest here is however confined to the Frobenius structures on these spaces and the related Whitham-type hierarchies.\\
The fact that a double Hurwitz space carries some sort of Frobenius structure does not come as a surprise. It was shown by Krichever \cite{Kr1} that an exact solution of WDVV can be associated to a moduli space of spectral curves in a very broad sense; in his setting, the meromorphic function on the spectral curve needs not even be single-valued, and can in fact be replaced by an Abelian integral. The present work can be seen as a detailed study of this kind of weak Frobenius structures on generalized Hurwitz spaces in a fairly simple case\footnote{We essentially consider Abelian integrals of differentials of the third kind with integer residues at the marked points (the covering map being identified with the exponential of the Abelian integral, and thus being single-valued). Although we will not treat it here, let us mention that the methods we present admit a straightforward generalization to the case of non-integer residues. The latter turns out to have interesting applications in the context of equivariant Gromov-Witten theory: concrete examples suggest that for certain toric varieties a suitable mirror model is given by a multivalued Ginzburg-Landau superpotential, with periods related to equivariant parameters. This topic will be investigated in a subsequent publication with A. Brini, G. Carlet and P. Rossi.}.\\
\indent Our approach is a direct adaptation to the double ramification setting of that of Dubrovin in \cite{Du-Diff}: we use the language of horizontal differentials and their universal pairing to define the relevant metrics on the moduli space in canonical coordinates. Let us summarize our main results:
\begin{itemize}
\item We show that a double Hurwitz space is endowed with a Frobenius structure of the type arising from Dubrovin's almost-duality \cite{Du-Alm}; this roughly speaking means that the unit vector field and the Euler vector field coincide in this structure. 
\item From the deformed almost-dual flat connection we extract a complete hierarchy of hydrodynamic type, naturally represented in the ``invariant" form of Flaschka,  Forest and McLaughlin \cite{FFM}.
\item We observe that whenever one of the two assigned ramification profiles is trivial, a new compatible flat metric appears on the Hurwitz space. This recovers Dubrovin's construction for ``single" Hurwitz spaces, but also prompts an additional remark: the Hurwitz spaces without tangency constraints carry a \emph{tri-hamiltonian} Frobenius structure in the sense of \cite{Rom}.
\item We provide explicit formulas for the flat coordinates of the metrics involved, expressed in terms of pairings of horizontal differentials. This yields a practically efficient way of computing the corresponding prepotentials.
\end{itemize}
Again, all the Frobenius structure appearing depend upon the choice of a quasi-momentum differential. One of the main novelties of the double-ramification setting is that singularities of the quasi-momentum at the zeros of the covering map become admissible.\\

Our construction is motivated by several instances of integrable hierarchies whose dispersionless limit/Whitham theory appears to be best treated in the present setting. An example appeared in \cite{BrCaRo}, where the dispersionless limit of the Ablowitz-Ladik hierarchy is shown to be described by a Hurwitz space with a non-admissible quasi-momentum differential. We consider a similar case here, namely the dispersionless limit of the q-deformed Gelfand-Dickey hierarchy \cite{Fr-Def, FrReSTS}. As a second example we study the Whitham-averaging of the sine-Gordon equation over the space of $g$-phase solutions, which is known  to be described by a double Hurwitz space of hyperelliptic curves \cite{EFMM}.

\subsection*{Acknowledgments} I would like to thank my advisor B. Dubrovin for drawing my attention to examples 4.1 and 4.2, which were the main source of inspiration for this work, and for many valuable suggestions and remarks. This work has been partially supported by the project FroM-PDE funded  by the European Research Council through the Advanced Investigator Grant Scheme.
\section{Preliminaries}
\subsection{Frobenius structures}
In this work we will make use of a few different notions of Frobenius structure; we collect them in the following
\begin{Def} Let $M$  be an $n$-dimensional complex manifold. A \emph{Frobenius structure} on $M$ is the data of
\begin{itemize}
\item A flat metric $\eta$ on $M$. By this we mean a holomorphic symmetric non-degenerate $(0, 2)$-tensor with flat Levi-Civita connection.
\item A commutative unital algebra structure on the tangent spaces of $M$, depending holomorphically on the point and compatible with $\eta$ in the sense that
$$
\eta(X\cdot Y, Z)=\eta(X, Y\cdot Z)\qquad \text{for all vector fields}\;\; X, Y, Z
$$
where $\cdot$ denotes the product of the algebra.
\end{itemize}
such that the pencil of affine connections
\begin{equation}\label{defEC}
\nabla^{(z)}_X Y\doteq \nabla_X Y+zX\cdot Y\qquad z\in\mathbb{C}
\end{equation}
is identically flat in the complex parameter $z$. Here $X, Y$ are arbitrary vector fields on $M$ and $\nabla$ is the Levi-Civita connection of $\eta$. We distinguish three types of Frobenius structure:\vspace{10pt}
\\
{\bf(D)}\hspace{10pt}The Frobenius structure is \emph{dual-type} if there is a constant $d$ such that
\begin{equation}\label{dualtype}
\nabla df=0\;\;\Rightarrow\;\;\partial_e f=\frac{1-d}{2}f
\end{equation}
for all functions $f$ on $M$.\vspace{5pt}\\
{\bf(C)}\hspace{10pt} The Frobenius structure is \emph{conformal} if
\begin{itemize}
\item[(i)] The unit vector field $\partial_e$ is flat with respect to $\nabla$.
\item[(ii)] There exists an \emph{Euler vector field} $\partial_E$ satisfying the following:
\begin{itemize}
\item The endomorphism $\nabla\partial_E$ of the tangent spaces is covariantly constant and diagonalizable.
\item Define the \emph{grading operator} $\hat{\mu}$ as the traceless part of $\;-\nabla\partial_E$,
\begin{equation}\label{gradingOP}
\hat{\mu}\doteq \left(1-\frac{d}{2}\right) I-\nabla\partial_E,\hspace{25pt}d= 2-2\; \text{tr}(\nabla\partial_E)
\end{equation}
and $\mathcal{U}$ as the operator of multiplication by the Euler vector field: 
$$
\mathcal{U}X\doteq \partial_E\cdot X
$$
Then the flat pencil \eqref{defEC} extends to a \emph{flat} meromorphic connection on $M\times\mathbb{P}^1$ via the formulas
$$
\nabla^{(z)}\frac{\partial}{\partial z}=0
$$
$$
\nabla^{(z)}_{\partial/\partial z} X= \frac{\partial}{\partial z}X+\mathcal{U} X-\frac{1}{z}\hat{\mu}X
$$
\end{itemize}
\end{itemize}
\vspace{5pt}
{\bf(T)}\hspace{10pt} The Frobenius structure is \emph{tri-hamiltonian} if it is conformal, $n$ is even and the grading operator \eqref{gradingOP} has only two eigenvalues $\pm d/2$ with multiplicity $n/2$.\vspace{10pt}\\
For each of the three definitions, the constant $d$ is called the \emph{charge}.
\end{Def}
Part (C) coincides with Dubrovin's original definition of Frobenius manifold \cite{Du-2dtft}. One of the main consequences of the conformality axiom (ii) is the existence of a \emph{flat pencil of metrics} on $M$, formed by $\eta$ together with the \emph{intersection form}
$$
g(X, Y)\doteq \eta(X, \mathcal{U}^{-1 }Y)
$$
The latter is only defined outside the locus of degeneracy of $\mathcal{U}$ (the so-called \emph{discriminant}). The notion (D) of dual-type Frobenius structure was introduced by Dubrovin in the context of almost-duality \cite{Du-Alm}: he showed that any conformal Frobenius manifold carries an additional dual-type structure, in which the flat metric is $g$ and the product is
\begin{equation}\label{star}
X\star Y\doteq \mathcal{U}^{-1}X\cdot Y
\end{equation}
\begin{Rem} For $d=1$ \eqref{dualtype} means that the unit is flat; thus a dual-type structure of charge one can be equivalently described as non-conformal Frobenius (i.e. without Euler vector field). Relevant examples arise in the context of equivariant Gromov-Witten theory.
\end{Rem}
The notion (T) of tri-hamiltonian Frobenius structure was introduced by Pavlov and Tsarev \cite{PT} in the semisimple setting, and treated in general in \cite{Rom}. Its interest lies in the fact that on a tri-hamiltonian Frobenius manifold the third metric
$$
\tilde{\eta}(X, Y)\doteq\eta(X, \mathcal{U}^{-2 }Y)
$$
is flat and forms flat pencils with both $\eta$ and $g$; moreover, the product
\begin{equation}\label{ast}
X\ast Y \doteq \mathcal{U}^{-2}X\cdot Y
\end{equation}
induces together with $\tilde{\eta}$ a third Frobenius structure on $M$. To each of the three structures one can associate a prepotential (i.e. a solution of the WDVV equations) in the standard way: the third derivatives of the prepotential with respect to flat coordinates are the totally covariant structure constants of the corresponding product.\\

Finally, recall that a Frobenius structure is \emph{semisimple} if at a generic point the algebra structure on the tangent space contains no nilpotents. Near a semisimple point one may locally construct \emph{canonical coordinates} $v_1, \dots, v_n$  reducing the multiplication table to
$$
\frac{\partial}{\partial v_i}\cdot \frac{\partial}{\partial v_j}=\delta_{ij}\frac{\partial}{\partial v_i}
$$
and (consequently) the metric to diagonal form. In the following we will always deal with semisimple structures and work directly in a system of canonical coordinates.
\subsection{Double Hurwitz spaces}
Let $\mu=(m_1, \dots, m_K)$, $\nu=(n_1, \dots, n_L)$ be two collections of positive integers with equal sum.
\begin{Def}
A \emph{Hurwitz cover} (or \emph{spectral curve}) of type $g, \mu, \nu$ is a pair $(C_g, \lambda)$, where
\begin{itemize}
\item $C_g$ is a smooth genus $g$ curve with $K+L$ distinct ordered marked points $z_1, \dots, z_K, p_1, \dots, p_L$ and a marked symplectic homology basis $a_1, \dots a_g$, $b_1, \dots, b_g \in H_2(C_g, \mathbb{Z})$.
\item $\lambda:C_g\to \mathbb{P}^1$ is a meromorphic function with zeros at the $z_a$ and poles at the $p_b$, with respective multiplicities $m_a, n_b$. That is, we have
$$
(\lambda)=\sum_{a=1}^Km_az_a-\sum_{b=1}^L n_b p_b
$$
as divisors on $C_g$.
\end{itemize}
The \emph{double Hurwitz space} $\mathcal{H}_{g, \mu, \nu}$ is the moduli space of Hurwitz covers of type $g, \mu, \nu$, where equivalence of covers is given by commutative diagrams
$$\xymatrix{
C_g \ar[d]_\lambda \ar[r]^\sim &C_g'\ar[ld]^{\lambda'}\\
\mathbb{P}^1 }$$
\end{Def}
Note that the base $\mathbb{P}^1$ is parametrized: any automorphism of $\mathbb{P}^1$ acts nontrivially on the Hurwitz space, as long as it preserves the type of the cover.\\
\begin{Rem}
Consider in particular the $\mathbb{C}_*$-action given by
\begin{equation}\label{resc}
(C_g, \lambda)\mapsto (C_g, a\lambda)\qquad a\in\mathbb{C_*}
\end{equation}
which always preserves the type of the cover. The quotient of the moduli space by this action is closely related to the so-called
\emph{double ramification cycle} $DR_{g, \mu, \nu}$, which is the subvariety of moduli space $\mathcal{M}_{g, K+L}$ of smooth genus $g$ curves with $K+L$ marked points defined by the requirement that the divisor
$$
\delta_{\mu, \nu}\doteq\sum_{a=1}^Km_az_a-\sum_{b=1}^L n_b p_b
$$
be principal:
\begin{align*}
DR_{g, \mu,\nu}=	&\Big\{(C_g, z_a, p_b)\in\mathcal{M}_{g, K+L};\; \exists\; \lambda: C_g\to \mathbb{P}^1\;\;\text{such that}\;\; (\lambda)=\delta_{\mu, \nu}\Big\}
\end{align*}
Imposing $\delta_{\mu,\nu}$ to be principal amounts to $g$ equations on $\mathcal{M}_{g, K+L}$, so that $$\text{dim}\,DR_{g, \mu, \nu}= \text{dim}\mathcal{M}_{g, K+L}-g=2g-3+K+L$$
We have an obvious morphism
$$
\pi:\mathcal{H}_{g, \mu, \nu}\to DR_{g, \mu, \nu}
$$
which forgets the map $\lambda$ (equivalently, quotients by the action \eqref{resc}) and the marked homology basis. Since the second datum is discrete, the fiber of $\pi$ is 1-dimensional and we have
$$
\text{dim}\mathcal{M}_{g, \mu, \nu}= 2g-2+K+L\doteq n
$$
\end{Rem}
\vspace{15pt}
The space $\mathcal{H}_{g, \mu, \nu}$ may in general consist of several connected components. In the following we work on some chosen connected component of $\mathcal{H}_{g, \mu, \nu}$. Moreover, we restrict to the open dense subset consisting of ``generic" covers, in the sense that all zeros of $d\lambda$ are assumed to be simple away from the marked points. We denote by $\mathcal{M}_{g, \mu, \nu}$ the resulting space.\\

By Riemann-Hurwitz formula, a generic cover has $n=2g-2+K+L$ simple branch points, which equals the dimension of the space. The morphism sending $(C_g, \lambda)$ to the disordered set of critical values of $\lambda$ is unramified of finite degree. Hence the (arbitrarily ordered) branch points $u_1, \dots, u_n$ of $\lambda$ provide local coordinates on $\mathcal{M}_{g, \mu, \nu}$.\\
\indent Let $\mathcal{C}_{g, \mu, \nu}\to\mathcal{M}_{g, \mu, \nu}$ be the universal curve, i.e. the bundle over $\mathcal{M}_{g, \mu, \nu}$ whose fiber at the point $u\in \mathcal{M}_{g, \mu, \nu}$ is the $C_g(u)$. The functions $\lambda$ on the fibers glue to a map (which we denote by the same letter)
$$
\lambda: \mathcal{C}_{g, \mu, \nu}\to\mathbb{P}^1
$$
We identify nearby fibers along the level sets
\begin{equation}\label{horiz}
\lambda=\text{const}
\end{equation}
and so obtain a canonical lifting of vector fields from the base to the universal curve.\\
The marked points $z_1, \dots, z_K, p_1, \dots, p_L$ provide global horizontal sections of $\mathcal{C}_{g, \mu, \nu}$. In general, level sets of $\lambda$ fail to be transversal to the fibers at the ramification locus of $\lambda$: let $C_g$ be the fiber over $u$ and $P_i\in C_g$ be the $i$-th ramification point: 
\begin{equation}\label{ramifPi}
d\lambda(P_i)=0, \qquad \lambda(P_i)=u_i\qquad i=1, \dots, n
\end{equation}
Then the horizontal lift $\partial/\partial u_i$ is tangent to $C_g(u)$ at $P_i$. Another way to view this is to observe that near the point $P_i$  the natural local parameter on the curve is
$$
\tau=(\lambda - u_i)^{1/2}\qquad \text{near}\;\; P_i
$$
which is not horizontal as it depends explicitly on $u_i$.
\section{Horizontal differentials and flat metrics}
To simplify notation, we drop the subscripts $g, \mu, \nu$ everywhere from now on.
\begin{Def}Let $(C, \lambda)$ be a spectral curve belonging to $\mathcal{M}$. An \emph{admissible differential} on $C$ is a holomorphic differential $\Omega$ on the universal cover of $C\setminus \{ z_1\dots, z_K, p_1, \dots, p_L\}$ such that for all $\gamma\in H_1(C\setminus \{z_a, p_b\}, \mathbb{Z})$ we have
\begin{equation}\label{jumps}
\Omega(P+\gamma)-\Omega(P)=\sum_{k\in\mathbb{Z}}c_k(\gamma)\lambda^{k}d\lambda\qquad P\in C
\end{equation}
with only a finite number of $c_k$ non-zero.
\end{Def}
The (infinite dimensional) vector space of admissible differentials on a spectral curve induces a (infinite rank) vector bundle $\mathcal{A}\to\mathcal{M}$ on the Hurwitz space. The \emph{horizontal differentials} are certain sections of this bundle. The horizontality condition is as follows: let $\Omega=\Omega(u)$ be a local section of $\mathcal{A}$. We view $\Omega$ as a multivalued 1-form on the universal curve $\mathcal{C}$, and denote by $\partial_i\Omega$ its Lie-derivative with respect to $\partial/\partial u_i$. Recall that the latter is defined as the horizontal lift to the universal curve of the coordinate vector field $\partial/\partial u_i$ on the base.
\begin{Def}
A local section $\Omega=\Omega(u)$ of $\mathcal{A}$ is \emph{horizontal} if, for all $i=1, \dots, n$, the restriction of $\partial_i\Omega$  to each fiber of the universal curve is an Abelian differential of the second kind, with only a double pole at $P_i$ and zero $a$-periods.
\end{Def}
To rephrase more concretely the definition, we spell out the analytic properties of a generic admissible differential $\Omega$ on a given spectral curve:
\begin{align}\label{ap}
&\Omega =\sum_{k<-1}B_{a,k}\tau^kd\tau+\frac{1}{m_a}\sum_{l\in\mathbb{Z}}R_{a,l}d(\lambda^l\log\lambda)+\text{reg}\\
&\hspace{170pt}\text{at}\; z_a,\; \tau=\lambda^{1/m_a}, \;a=1, \dots, K\nonumber\\
&\Omega =\sum_{k<-1}C_{b,k}\tau^kd\tau+\frac{1}{n_a}\sum_{l\in\mathbb{Z}}S_{b,l}d(\lambda^l\log\lambda)+\text{reg}\nonumber\\ &\hspace{170pt}\text{at}\; p_b,\; \tau=\lambda^{-1/n_b}, \;b=1, \dots, L\nonumber\\
&\Omega(P+a_\alpha)-\Omega(P)=\sum_{k\in\mathbb{Z}}D_{\alpha, k}\lambda^kd\lambda\equiv dD_\alpha(\lambda)\hspace{60pt} \alpha=1, \dots, g\nonumber\\
&\Omega(P+b_\alpha)-\Omega(P)=\sum_{k\in\mathbb{Z}}E_{\alpha, k}\lambda^kd\lambda\equiv dE_\alpha(\lambda)\hspace{62pt} \alpha=1, \dots, g\nonumber\\
&\oint_{a_\alpha}\Omega=A_\alpha\hspace{206pt}\alpha=1, \dots, g\nonumber
\end{align}
\begin{Rem}
Some care is needed in the definition of the periods of a multi-valued differential. We let all the periods be computed along paths starting at some fixed point $P_0$ and ending at the same point.  When working in families, we assume that $P_0$ belongs to an horizontal section $\lambda(P_0)=\text{const}$ of the universal curve. Then the $a$- and $b$-periods of a multivalued differential of the form \eqref{ap} contain an additive constant $D_\alpha(\lambda(P_0)),  E_\alpha(\lambda(P_0))$ respectively. Speaking of the periods of an admissible differential, we are implicitly quotienting out this additive constant.
\end{Rem}
Clearly, the coefficients $A, B, C, D, E, R, S$ in \eqref{ap} completely characterize $\Omega$. A horizontal section is by definition a moduli-dependent family of differentials of the form \eqref{ap}, such that all the coefficients $A, B, C, D, E, R, S$ are \emph{independent of the moduli}. As a consequence,  all singularities, jumps and $a$-periods disappear when we differentiate with respect to the moduli while keeping $\lambda$ constant. At the same time, a double pole at $P_i$ must appear, since
\begin{align}\label{doublepole}
		&\hspace{50pt}\Omega=(\Omega(P_i)+O(\tau))d\tau	&\qquad \text{near}\; P_i, \;\; \tau=(\lambda-u_i)^{1/2}\nonumber\\
&\Rightarrow\hspace{30pt}	 \partial_i\Omega= \left(\frac{\Omega(P_i)}{2\tau^2}+O(1)\right)d\tau
\end{align}
Here and in the following, we use the notation $\Omega(P_i)$ to indicate the $0$th order term in the expansion of $\Omega$ at the $i$-th ramification point in powers of the natural local parameter. 
\begin{Def} A horizontal differential is \emph{covariantly constant} if $\partial_i\Omega=0$ for all $i$, or, equivalently, it is of the form
$$\sum_{k\in\mathbb{Z}}(c_k+b_k\log\lambda)\lambda^kd\lambda$$
with $c_k, b_k$ independent of the moduli.\\
We denote by $H(\mathcal{M})$ the space of horizontal differentials modulo covariantly constant differentials.
\end{Def}
It is clear from \eqref{ap} that an admissible differential on a given spectral curve of $\mathcal{M}$ extends uniquely to an horizontal differential on the whole $\mathcal{M}$. Thus the coefficients $A, B, C, D, E, R, S$ can serve as coordinates on $H(\mathcal{M})$; they are however subject to a set of linear constraints, partly of geometrical nature (e.g. the sum of residues must be zero) and partly coming from the fact that we are quotienting out covariantly constant differentials. More precisely we have:
\begin{Lemma}\label{basis} The following differentials form a basis of $H(\mathcal{M})$:
\begin{itemize}
\item[1.] Abelian differentials $\Omega_{a, r}^{(-k)},\; r=1, \dots, m_a-1,\; a=1, \dots, K, \;k=0, 1, \dots$ of the second kind having only a pole at $z_a$ with principal part
$$
\Omega_{a, r}^{(-k)}=\left(\frac{1}{\tau^{k m_a+1+r}}+ O(1)\right)d\tau \qquad \emph{at}\;\; z_a,\;\; \tau=\lambda^{1/m_a}
$$
and zero $a$-periods.
\item[2.]Abelian differentials $\Omega_{b, s}^{(k)},\; s=1, \dots, n_b-1,\; b=1, \dots, L, \;k=0, 1, \dots$ of the second kind having only a pole at $p_b$ with principal part
$$
\Omega_{b, s}^{(k)}=\left(\frac{1}{\tau^{k n_b+1+s}}+ O(1)\right)d\tau \qquad \emph{at}\;\; p_b,\;\; \tau=\lambda^{-1/n_b}
$$
and zero $a$-periods.
\item[3.] Abelian differentials $\Omega_{a}^{(-k)}, \; a=2, \dots, K, \; k=1, 2, \dots$ of the second kind having only a pole at $z_a$ with principal part
$$
\Omega_a^{(-k)}= d(\lambda^{-k})+\emph{reg}\qquad\emph{at}\;\; z_a
$$
and zero $a$-periods.
\item[4.] Abelian differentials $\Omega_{b}^{(k)}, \; b=2, \dots, L, \; k=1, 2, \dots$ of the second kind having only a pole at $p_b$ with principal part
$$
\Omega_b^{(k)}= d(\lambda^{k})+\emph{reg}\qquad\emph{at}\;\; p_b
$$
and zero $a$-periods.
\item[5.] Abelian differentials $\Phi^{(0)}_a, \; a=2, \dots, K$ of the third kind having simple poles at $z_1, z_a$ with residues $\pm 1$ respectively and zero $a$-periods.
\item[6.] Abelian differentials $\Psi^{(0)}_b, \; b=2, \dots, L$ of the third kind having simple poles at $p_1, p_b$ with residues $\pm 1$ respectively and zero $a$-periods.
\item[7.] Multivalued differentials $\Phi_a^{(k)}, \; a=2, \dots, K, \; k\in\mathbb{Z}\setminus 0$ with singularities only at $z_1, z_a$ of the form
\begin{align*}
\Phi_a^{(k)}	&=\frac{1}{m_1}d(\lambda^k\log\lambda)+\emph{reg}\qquad \emph{at}\;\;z_1\\
		&=\frac{1}{m_a}d(\lambda^k\log\lambda)+\emph{reg}\qquad \emph{at}\;\;z_a
\end{align*}
and zero $a$-periods.
\item[8.] Multivalued differentials $\Psi_b^{(k)}, \; b=2, \dots, K$ with singularities only at $p_1, p_b$ of the form
\begin{align*}
\Psi_b^{(k)}	&=\frac{1}{n_1}d(\lambda^k\log\lambda)+\emph{reg}\qquad \emph{at}\;\;p_1\\
		&=\frac{1}{n_b}d(\lambda^k\log\lambda)+\emph{reg}\qquad \emph{at}\;\;p_b
\end{align*}
and zero $a$-periods.
\item[9.] Multivalued differentials $\sigma_\alpha^{(k)}, \; \alpha=1, \dots, g, \; k\in\mathbb{Z}$ with jumps along the $a$-cycles of the form
$$
\sigma^{(k)}_\alpha(P+a_\alpha)-\sigma(P)=\lambda^{k-1}d\lambda
$$
and zero $a$-periods.
\item[10.] Multivalued differentials $\rho_\alpha^{(k)}, \; \alpha=1, \dots, g, \; k\in\mathbb{Z}$ with jumps along the $b$-cycles of the form
$$
\rho^{(k)}_\alpha(P+b_\alpha)-\rho(P)=\lambda^{k-1}d\lambda
$$
and zero $a$-periods.
\item[11.] Holomorphic differentials $\omega_\alpha, \;\alpha=1, \dots, g$ normalized on the $a$-periods:
$$
\oint_{a_\beta}\omega_\alpha=\delta_{\alpha\beta}
$$
\end{itemize}
\end{Lemma}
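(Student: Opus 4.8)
The plan is to fix a generic spectral curve $(C,\lambda)\in\mathcal{M}$ and reduce the statement to a linear classification of admissible differentials on $C$. By the extension principle recorded just before the lemma, restriction to $C$ identifies $H(\mathcal{M})$ with the space of admissible differentials on $C$ modulo the covariantly-constant ones $\sum_l(c_l+b_l\log\lambda)\lambda^l d\lambda$, and an admissible differential is completely determined by the moduli-independent data $A_\alpha,B_{a,k},C_{b,k},D_{\alpha,k},E_{\alpha,k},R_{a,l},S_{b,l}$ of \eqref{ap}, only finitely many of which are nonzero. It therefore suffices to produce, on $C$, admissible differentials realizing a basis of this quotient.

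First I would verify that each of the eleven families exists. Types 1--4 are Abelian differentials of the second kind with a single prescribed residue-free principal part, and types 5--6 are differentials of the third kind with opposite unit residues at a pair of marked points; in each case the $a$-periods are normalized to zero by adding a suitable combination of the $\omega_\alpha$, and existence and uniqueness are classical. The multivalued families 7--10 are the genuinely delicate constructions, and I expect them to be the main obstacle: one must produce a differential whose only multivaluedness is a prescribed logarithmic singularity at a pair of points (types 7--8) or a prescribed additive period $\lambda^{k-1}d\lambda$ along a single handle cycle (types 9--10), with no other periods and vanishing $a$-periods. I would handle this through the Riemann bilinear relations, the prescribed monodromy determining a class on $C\setminus\{z_a,p_b\}$ whose unique representative with zero $a$-periods is singled out by the nondegeneracy of the period pairing.

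Next I would pin down the constraints and the covariantly-constant subspace. Applying the jump homomorphism $\gamma\mapsto\sum_k c_k(\gamma)\lambda^k d\lambda$ of \eqref{jumps} to the single homology relation $\sum_a\gamma_{z_a}+\sum_b\gamma_{p_b}=0$ in $H_1(C\setminus\{z_a,p_b\})$ yields $\sum_a R_{a,l}=\sum_b S_{b,l}$ for every $l\ne0$, which together with the residue theorem ($l=0$) exhaust the linear relations among the data. The differentials $d(\lambda^{\mp k})$, $d\log\lambda$ and $d(\lambda^l\log\lambda)$ are covariantly constant, and their data are the ``diagonal'' configurations: the last carries $R_{a,l}=m_a$, $S_{b,l}=n_b$, so it realizes $\sum_a m_a=\sum_b n_b$ but does not lie in $\{\sum_a R=0,\ \sum_b S=0\}$. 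This explains the shape of the list. The pairings of $z_1$ with $z_a$ (and of $p_1$ with $p_b$) in types 5--8 produce precisely the configurations with $\sum_a R_{a,l}=\sum_b S_{b,l}=0$, which together with the one diagonal direction span the full $(K+L-1)$-dimensional constraint space at each order $l$; likewise the $\lambda$-integer principal parts at the $z_a$ (resp. $p_b$) span a $K$-dimensional (resp. $L$-dimensional) space whose all-ones direction is the covariantly-constant $d(\lambda^{\mp k})$, which forces the ranges $a,b\ge2$ in types 3--4.

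Finally I would assemble the basis by a reduction that is triangular once the data are ordered with the fractional principal parts first, the $\lambda$-integer and logarithmic data next, and the handle data $A,D,E$ last. Given an arbitrary horizontal $\Omega$, the fractional blocks are cleared by types 1--2; the remaining $\lambda$-integer poles, residues, higher logarithms, handle jumps and $a$-periods form one block coupled through the monodromy of $\log\lambda$, which I would clear against types 3--11 together with the covariantly-constant differentials $d(\lambda^{\mp k})$ and $d(\lambda^l\log\lambda)$. The remainder then has all of $A,B,C,D,E,R,S$ equal to zero, hence is a single-valued holomorphic differential with vanishing $a$-periods and is therefore zero. Thus $\Omega$ equals, modulo covariantly-constant differentials, the assembled finite combination of the eleven families; linear independence follows at once, since distinct families carry linearly independent data and the only covariantly-constant combinations are the diagonal directions already quotiented out. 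The accounting of these diagonal directions---equivalently, the invertibility of the incidence matrices that fix the ranges $a,b\ge2$ and the $z_1,p_1$-pairings---is the one place where I would proceed most carefully.
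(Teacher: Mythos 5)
The paper offers no proof of Lemma \ref{basis}: it is stated as the precise form of the preceding paragraph (``the coefficients $A,B,C,D,E,R,S$ can serve as coordinates on $H(\mathcal{M})$; they are however subject to a set of linear constraints\dots''), with the actual work implicitly deferred to classical Riemann-surface theory and to the analogous statements in \cite{Du-Diff}. So there is no paper proof to compare yours against line by line; what you have written is the argument the paper leaves implicit, and it is essentially correct: restriction to a fixed curve via the extension principle, the data of \eqref{ap} as coordinates, the constraints coming from the residue theorem and from applying the jump homomorphism \eqref{jumps} to the relation $\sum_a\gamma_{z_a}+\sum_b\gamma_{p_b}=0$, the identification of the covariantly constant directions with the ``diagonal'' configurations (which is exactly what forces the ranges $a,b\geq 2$ and the pairings with $z_1,p_1$), and the closing step that a remainder with all data zero is a single-valued holomorphic differential with vanishing $a$-periods, hence zero.

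Two remarks. First, your constraint $\sum_a R_{a,l}=\sum_b S_{b,l}$ does real work: it shows that the differentials of types 7--8 \emph{as literally printed} (both local coefficients $+\frac{1}{m_1}$ and $+\frac{1}{m_a}$) cannot exist, since their puncture jumps would sum to $4\pi i k\lambda^{k-1}d\lambda\neq 0$ while the sum of puncture loops is null-homologous. Your tacit reading of them as opposite-sign pairings, parallel to the $\pm 1$ residues of types 5--6, is therefore forced, and is clearly what the author intends; this is a point where your analysis silently corrects the statement. Second, the thinnest spot is the existence of the multivalued families 7--10, which you only gesture at through ``Riemann bilinear relations.'' The standard route -- the one implicit in \cite{Du-Diff}, on which the paper relies for its other two lemmas -- is the explicit construction from the $a$-normalized symmetric bidifferential $W(P,Q)$: integrating $\lambda^k(Q)W(P,Q)$ along a path joining $z_1$ to $z_a$ (resp.\ along an $a$- or $b$-cycle) produces the prescribed $d(\lambda^k\log\lambda)$-type singularities (resp.\ the prescribed jump $\lambda^{k-1}d\lambda$), and the normalization $\oint_{a_\beta}W(\cdot,Q)=0$ delivers the zero $a$-period condition for free. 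Substituting this construction for your sketch makes the argument complete.
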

The main tool to associate flat geometries to horizontal differentials is a bilinear pairing on $H(\mathcal{M})$:
\begin{Lemma} Let
$$
\langle \cdot, \cdot\rangle: H(\mathcal{M})\times H(\mathcal{M})\to Funct(\mathcal{M})
$$
be the bilinear pairing defined by the formula
\begin{align}
\\
\langle\Omega_1, \Omega_2\rangle &=\sum_{a=1}^K\left[\sum_{k\leq-1}\frac{1}{k}B_{a,k-1}^{(1)}\emph{Res}_{z_a}\Big(\lambda^{k/m_a}\Omega_2\Big)- \sum_{l\in\mathbb{Z}}R_{a,l}^{(1)}\emph{p.v.}\int_{P_0}^{z_a}\lambda^l\Omega_2\right]+\nonumber\\
							&+\sum_{b=1}^L\left[\sum_{k\leq-1}\frac{1}{k}C_{b,k-1}^{(1)}\emph{Res}_{p_b}\Big(\lambda^{-k/n_b}\Omega_2\Big)-\sum_{l\in\mathbb{Z}}S_{b,l}^{(1)}\emph{p.v.}\int_{P_0}^{p_b}\lambda^l \Omega_2\right]+\nonumber\\
							&+\frac{1}{2\pi i}\sum_{\alpha=1}^g\left[\oint_{b_\alpha}D_\alpha^{(1)}(\lambda)\Omega_2-\oint_{a_\alpha}E_\alpha^{(1)}(\lambda)\Omega_2+A_\alpha^{(1)}\oint_{b_\alpha}\Omega_2\right]\nonumber
\\\nonumber
\end{align}
Here all the coefficients appearing refer to the first differential $\Omega_1$. The principal values are defined by subtraction of the divergent part of the integrals in the canonical local parameters and $P_0$ is an arbitrary horizontal section $\lambda(P_0)=\text{const}$.\\
Then the identity
\begin{equation}\label{resfor}
\frac{\partial}{\partial u_i}\langle \Omega_1, \Omega_2\rangle = \frac{\Omega_1(P_i)\Omega_2(P_i)}{2}
\end{equation}
holds true.
\end{Lemma}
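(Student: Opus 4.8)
The plan is to reduce the $u_i$-derivative of the pairing to a single local residue at the ramification point $P_i$, in two movements: first push the derivative onto the second argument using horizontality, then evaluate the resulting bilinear expression by the Riemann bilinear relations.

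First I would differentiate the defining formula directly. All the coefficients $A^{(1)}, B^{(1)}, C^{(1)}, D^{(1)}, E^{(1)}, R^{(1)}, S^{(1)}$ refer to $\Omega_1$ and are, by the very definition of a horizontal differential, independent of the moduli $u$. The remaining factors in each summand --- the powers $\lambda^{k/m_a}, \lambda^l$ and the functions $D_\alpha(\lambda), E_\alpha(\lambda)$ --- are functions of $\lambda$ with constant coefficients, hence are annihilated by $\partial_i$ because the horizontal lift holds $\lambda$ fixed. Since the marked points $z_a, p_b$, the base point $P_0$, and the cycles $a_\alpha, b_\alpha$ are all horizontal (fixed in the chosen homology basis), differentiation commutes with the residues, with the period integrals, and --- because the subtracted divergent parts have $u$-independent coefficients --- with the principal values as well. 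The upshot is the identity
\[
\partial_i\langle\Omega_1,\Omega_2\rangle=\langle\Omega_1,\partial_i\Omega_2\rangle,
\]
where the right-hand side denotes the same bilinear formula evaluated with $\partial_i\Omega_2$ in the second slot. By \eqref{doublepole} and the definition of horizontality, $\partial_i\Omega_2$ is now an honest meromorphic differential of the second kind on the fixed curve, with a single double pole at $P_i$, vanishing residue, and zero $a$-periods.

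Next I would recognize the defining formula, with such a differential in the second slot, as an instance of the Riemann bilinear relations. Let $f_1$ be a primitive of $\Omega_1$; its analytic structure is dictated by the coefficients of $\Omega_1$ --- poles and logarithmic singularities at the $z_a, p_b$ (producing, respectively, the residue and principal-value terms) and a prescribed monodromy across the $a$- and $b$-cuts (producing the period terms). Applying the residue theorem to $f_1\,\partial_i\Omega_2$ on the surface cut along the cycles and along the branch cuts that render $f_1$ single-valued, the boundary integral reproduces exactly the period terms while the local expansions at $z_a, p_b$ reproduce the residue and principal-value terms. Because $\partial_i\Omega_2$ has zero $a$-periods, no residues, and its only pole at $P_i$, every contribution cancels except the one coming from $P_i$, so that
\[
\langle\Omega_1,\partial_i\Omega_2\rangle=\operatorname{Res}_{P_i}\!\big(f_1\,\partial_i\Omega_2\big).
\]
Finally I would evaluate this residue from the local data. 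Near $P_i$, in the parameter $\tau=(\lambda-u_i)^{1/2}$, we have $\Omega_1=(\Omega_1(P_i)+O(\tau))d\tau$, hence $f_1=\mathrm{const}+\Omega_1(P_i)\,\tau+O(\tau^2)$, while $\partial_i\Omega_2=\big(\tfrac{\Omega_2(P_i)}{2}\tau^{-2}+O(1)\big)d\tau$. The coefficient of $\tau^{-1}d\tau$ in the product is $\tfrac12\Omega_1(P_i)\Omega_2(P_i)$ --- the constant term of $f_1$ multiplies only the double pole and contributes nothing --- which is the claimed value.

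The main obstacle is the middle step: verifying that the somewhat intricate combination of residues, principal values, and period integrals in the definition really is the bilinear reciprocity law for the multivalued primitive $f_1$, with all signs and $\tfrac{1}{2\pi i}$ normalizations in place so that only the $P_i$-contribution survives. Two features make this delicate. First, $\Omega_1$ is itself multivalued --- its jumps $dD_\alpha(\lambda), dE_\alpha(\lambda)$ across the cycles complicate the monodromy of $f_1$, and hence the boundary terms of the cut surface. Second, the logarithmic singularities at $z_a, p_b$ force the principal-value regularization, and one must check that the divergent parts subtracted there are genuinely $u$-independent (so that the first step goes through) and enter the reciprocity with the correct sign. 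Pinning down this overall sign and normalization is where essentially all of the care is needed; the local residue computation of the last step is then immediate.
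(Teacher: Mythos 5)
Your proposal is correct and follows essentially the same route as the paper, which simply invokes Dubrovin's proof from \cite{Du-Diff}: differentiate the pairing using moduli-independence of the coefficients (horizontality) to get $\partial_i\langle\Omega_1,\Omega_2\rangle=\langle\Omega_1,\partial_i\Omega_2\rangle$, then identify the pairing against the second-kind differential $\partial_i\Omega_2$ (double pole at $P_i$, no residue, zero $a$-periods) as a Riemann bilinear/reciprocity identity on the cut surface, which localizes to $\operatorname{Res}_{P_i}(f_1\,\partial_i\Omega_2)=\tfrac12\Omega_1(P_i)\Omega_2(P_i)$. The points you flag as delicate (sign/normalization bookkeeping in the reciprocity, $u$-independence of the subtracted divergent parts) are exactly the "obvious modifications" the paper leaves to the reader, and your treatment of them is sound.
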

\begin{proof} The proof of \cite{Du-Diff} works (with obvious modifications) in our more general setting.
\end{proof}
At this stage we select an horizontal differential $\phi\in H(\mathcal{M})$, called the \emph{quasi-momentum differential}. Setting
\begin{equation}\label{etai}
\eta_i\doteq\frac{\partial}{\partial u_i}\langle\phi, \phi\rangle=\frac{\phi(P_i)^2}{2}
\end{equation}
we associate to $\phi$ three metrics, all diagonal in the canonical coordinates:
\begin{equation}\label{1stm}
\eta\doteq\sum_{i=1}^n\eta_idu_i^2
\end{equation}
\begin{equation}\label{2ndm}
g\doteq\sum_{i=1}^n\frac{\eta_i}{u_i}du_i^2
\end{equation}
\begin{equation}\label{3rdm}
\tilde{\eta}\doteq\sum_{i=1}^n\frac{\eta_i}{u_i^2}du_i^2
\end{equation}
They are defined not globally on $\mathcal{M}$, but only on the dense open subset $\mathcal{M}_\phi=\{\phi(P_i)\neq0, i=1, \dots, n\}$. We will implicitly assume this restriction in the following.\\
\begin{Thm}\label{flatgeom} The metric $g$ is flat. Moreover
\begin{itemize}
\item If $\mu=(1, \dots, 1)$ then $\eta$ is flat.
\item If $\nu=(1, \dots, 1)$ then $\tilde{\eta}$ is flat.
\end{itemize}
Whenever two of the metrics $\eta, g, \tilde{\eta}$ are flat, they form a flat pencil.
\end{Thm}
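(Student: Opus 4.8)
The plan is to reduce the flatness of each of the three diagonal metrics to the Darboux--Egoroff conditions on its rotation coefficients, and then to read off the three cases from the symmetries of the target $\mathbb{P}^1$ that preserve the ramification type. First I would record that all three metrics are of Egoroff (potential) type with the \emph{same} potential. Writing $H_i=\sqrt{\eta_i}$ for the Lamé coefficients of $\eta$, the residue formula \eqref{resfor} gives $\eta_i=\partial_i\langle\phi,\phi\rangle$, so $\partial_j\eta_i=\partial_i\eta_j$ and the rotation coefficients $\gamma_{ij}=\partial_j H_i/H_j$ are symmetric. Passing to $g$ and $\tilde\eta$ amounts to rescaling $H_i\mapsto H_i/\sqrt{u_i}$ and $H_i\mapsto H_i/u_i$; in the coordinates $x_i=\log u_i$ and $y_i=1/u_i$, in which the products \eqref{star} and \eqref{ast} are respectively diagonalized, the metrics $g$ and $\tilde\eta$ are again Egoroff with potential $\langle\phi,\phi\rangle$, with rotation coefficients $\sqrt{u_iu_j}\,\gamma_{ij}$ and $-u_iu_j\,\gamma_{ij}$. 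Thus a single symmetric object $\gamma_{ij}$ controls all three geometries.

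Next I would compute $\gamma_{ij}$ explicitly. Since $\phi$ is horizontal, $\partial_j\phi$ is by \eqref{doublepole} the unique second-kind differential with principal part $\tfrac{\phi(P_j)}{2}\tau^{-2}d\tau$ at $P_j$ and vanishing $a$-periods; by the variational (Rauch) formula this equals $\tfrac12\phi(P_j)\,W(\cdot,P_j)/d\tau_j$, where $W$ is the Bergman bidifferential normalized on the $a$-cycles. Reading off the leading coefficient at $P_i$ gives $\gamma_{ij}=\tfrac12\,W(P_i,P_j)/(d\tau_i\,d\tau_j)$, an object depending only on the bare curve and the branch points, not on $\phi$. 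Rauch's formula for the variation of $W$ then yields the Darboux equations $\partial_k\gamma_{ij}=\gamma_{ik}\gamma_{kj}$ for distinct $i,j,k$, and a direct check shows these pass unchanged to the rescaled rotation coefficients of $g$ and $\tilde\eta$. Hence for each metric the single remaining flatness condition is the second Darboux identity, which for an Egoroff metric with symmetric rotation coefficients reads $\sum_k\partial_{v_k}(\text{rot.\ coeff.})=0$ in the respective coordinate $v$.

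Evaluating these sums, each collapses to a homogeneity statement for $\gamma_{ij}$: flatness of $g$ is equivalent to $E\gamma_{ij}=-\gamma_{ij}$ with $E=\sum_k u_k\partial_k$; flatness of $\eta$ to $\sum_k\partial_k\gamma_{ij}=0$; flatness of $\tilde\eta$ to the same translation identity in the $y_i$. Each I would prove by a symmetry of the target $\mathbb{P}^1$. The scaling $\lambda\mapsto a\lambda$ fixes the curve, the points $P_i$, the cycles and $W$, rescales $\tau_i\mapsto\sqrt a\,\tau_i$, hence sends $\gamma_{ij}\mapsto a^{-1}\gamma_{ij}$, and—crucially—always preserves the type of the cover; this forces $E\gamma_{ij}=-\gamma_{ij}$ unconditionally, so $g$ is flat for every $\mu,\nu$. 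The translation $\lambda\mapsto\lambda+c$ leaves $W,P_i,\tau_i$ invariant, hence fixes $\gamma_{ij}$, but preserves the type only when every zero of $\lambda$ is simple, i.e. $\mu=(1,\dots,1)$, giving $\eta$ flat in that case. Finally the involution $\lambda\leftrightarrow 1/\lambda$ exchanges $\mu\leftrightarrow\nu$ and carries $\eta$ to $\tilde\eta$ up to sign, so the $\tilde\eta$ statement is the $\eta$ statement for $1/\lambda$ and requires $\nu=(1,\dots,1)$.

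For the flat pencil, when two of the metrics are flat I would realize the contravariant pencil $g^\sharp-\lambda\,\eta^\sharp=\sum_i\frac{u_i-\lambda}{\eta_i}\partial_i^2$ as the family of intersection forms $\sum_i\frac{\eta_i}{u_i-\lambda}du_i^2$ of the shifted covers $(C,\lambda-\lambda_0)$, each flat by the first part (the shift being admissible precisely when $\eta$ is flat), and the analogous realization using $1/\lambda$-shifts for the pair $g,\tilde\eta$. Since the two metrics are simultaneously diagonalized and the affinor relating them is $\mathcal{U}=\mathrm{diag}(u_i)$, whose eigenvalues \emph{are} the coordinates and which therefore has vanishing Nijenhuis torsion, flatness of every member of the pencil upgrades to the flat-pencil property. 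I expect the main obstacle to be the precise matching of the three homogeneity weights to the three type conditions—that is, checking that these one-parameter groups are tangent to $\mathcal{M}$ exactly in the stated cases and act on $\gamma_{ij}$ with the predicted weight; by contrast the Darboux equations, depending only on the curve, should follow verbatim from Dubrovin's computation in \cite{Du-Diff}.
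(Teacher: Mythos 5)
Your argument for the three flatness statements is sound, but it follows a genuinely different route from the paper's. The paper never verifies the Darboux--Egoroff equations directly: it constructs flat coordinates explicitly, pairing the quasi-momentum against the kernels of the three operators \eqref{DeEe} --- $f_k=\langle\psi_k,\phi\rangle$ with $\psi_k$ running through $H_0(\mathcal{M})$, $H_T(\mathcal{M})$, $H_{\tilde{T}}(\mathcal{M})$ of Proposition \ref{HT0T} --- and shows via the residue formula \eqref{resfor} and the Leibniz-type rules that $g^{-1}(df_k,df_l)=\partial_E\langle\psi_k,\psi_l\rangle$ is constant (plus an independence argument using \eqref{doublepole} and $\phi(P_i)\neq 0$). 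You instead work at the level of curvature: the Bergman-kernel identity $\gamma_{ij}=\tfrac12 W(P_i,P_j)$ (a sharpened, quantitative form of the paper's lemma containing \eqref{gammaeq}), Rauch's formula for the off-diagonal Darboux equations, and the integrated $sl(2,\mathbb{C})$ flows --- scaling always, translation exactly when $\mu$ is trivial, inversion exchanging the two cases --- to close the Egoroff condition by homogeneity. Both proofs exploit the same $sl(2,\mathbb{C})$ symmetry, but yours acts on rotation coefficients while the paper's acts on horizontal differentials. The trade-off is real: the paper's route is constructive, and its by-product, Corollary \ref{flatcoords}, is what makes the prepotential computations in the examples possible; your route needs neither Lemma \ref{basis} nor the kernel computations, and isolates cleanly why each type condition is needed, but it produces no flat coordinates and imports Rauch's formula and Mokhov's compatibility theorem as external inputs.

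The one genuine weakness is in your flat-pencil paragraph, on two counts. First, the M\"obius-shift realization covers the pairs $(g,\eta)$ and $(g,\tilde{\eta})$ but not $(\eta,\tilde{\eta})$: the first metric of a M\"obius-transformed cover has coefficients $\eta_i/f_i$ with $f_i=(cu_i+d)^2$ a perfect square of a linear function of $u_i$, whereas the pencil of $\eta$ and $\tilde{\eta}$ requires $f_i=1+\lambda u_i^2$, which is not such a square for $\lambda\neq 0,\infty$; so when $\mu$ and $\nu$ are both trivial one pair of the theorem is left unproved. Second, even for $(g,\eta)$, identifying $\sum_i\frac{\eta_i}{u_i-\lambda_0}du_i^2$ with the intersection form of $(C,\lambda-\lambda_0)$ tacitly assumes $\phi$ remains admissible on the shifted Hurwitz space; in the double setting $\phi$ may have singularities at the zeros of $\lambda$ (the paper stresses this novelty), which are not marked points of $\lambda-\lambda_0$, so the shifted ``intersection form with quasi-momentum $\phi$'' need not exist as such. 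Both defects are repairable with a tool you already have: once the Darboux equations hold, the remaining flatness condition for a metric $\sum_i(\eta_i/f_i)du_i^2$, with each $f_i$ a function of $u_i$ alone, is \emph{linear} in the $f_i$, and so are the contravariant Christoffel symbols of such a metric. Your three homogeneity identities say precisely that $f_i=u_i$ is always a solution, $f_i=1$ is one when $\mu$ is trivial, and $f_i=u_i^2$ is one when $\nu$ is trivial; linearity then makes every member of every pencil flat and the Christoffel symbols additive, which settles all pairs at once without M\"obius shifts or Mokhov's theorem. This is in fact exactly how the paper concludes.
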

We begin by noting that the curvature of the metrics \eqref{1stm}, \eqref{2ndm}, \eqref{3rdm} does not depend on the choice of the quasi-momentum differential:
\begin{Lemma} For all $\Omega_1, \Omega_2 \in H(\mathcal{M})$, the identity
\begin{equation}\label{gammaeq}
\Omega_1(P_i)\frac{\partial}{\partial u_i}\Omega_2(P_j)=\Omega_2(P_i)\frac{\partial}{\partial u_i}\Omega_1(P_j)\qquad \text{for all}\;\; i\neq j
\end{equation}
holds true. In particular, the rotation coefficients
$$
\gamma_{ij}=\frac{\partial_i\sqrt{\eta_j}}{\sqrt{\eta_i}}
$$
do not depend on the choice of $\phi$.
\end{Lemma}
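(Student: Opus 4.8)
The plan is to show that the quantity $\frac{\partial}{\partial u_i}\Omega(P_j)$ appearing in \eqref{gammaeq} is, for $i\neq j$, nothing but the value at $P_j$ of the \emph{single} second-kind differential $\partial_i\Omega$, and that the latter factors through a curve-dependent object that is linear in $\Omega$. To this end I would first introduce the normalized second-kind differential $\theta_i$ on the fiber $C$, characterized by a single double pole at $P_i$ with principal part $\tfrac12\tau_i^{-2}d\tau_i$ (where $\tau_i=(\lambda-u_i)^{1/2}$), vanishing residue, and zero $a$-periods. Such a differential exists and is unique: the difference of any two candidates would be a holomorphic differential with zero $a$-periods, hence identically zero. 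By the horizontality of $\Omega$ together with the local computation \eqref{doublepole}, the differential $\partial_i\Omega$ is precisely a second-kind differential with double pole $\tfrac{\Omega(P_i)}{2}\tau_i^{-2}d\tau_i$ at $P_i$ and zero $a$-periods (here one also uses that differentiating the moduli-independent jump data of \eqref{ap} at fixed $\lambda$ gives zero, so $\partial_i\Omega$ is genuinely single-valued). Uniqueness then forces $\partial_i\Omega=\Omega(P_i)\,\theta_i$, with $\theta_i$ independent of $\Omega$.

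The second ingredient is the commutation of ``evaluating at $P_j$'' with ``differentiating along $\partial_i$'' when $i\neq j$. The value $\Omega(P_j)$ is by definition the zeroth Taylor coefficient of $\Omega$ in the local parameter $\tau_j=(\lambda-u_j)^{1/2}$. Since $i\neq j$, the modulus $u_j$, and hence $\tau_j$ itself, is constant along the horizontal lift of $\partial/\partial u_i$ (this lift keeps $\lambda$ fixed, so $\partial_i\tau_j=0$ and $\partial_i(d\tau_j)=0$). Writing $\Omega=h(\tau_j,u)\,d\tau_j$ near $P_j$, this yields $\partial_i\Omega=(\partial h/\partial u_i)\,d\tau_j$, whose zeroth coefficient is exactly $\frac{\partial}{\partial u_i}h(0,u)=\frac{\partial}{\partial u_i}\Omega(P_j)$. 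In other words $\frac{\partial}{\partial u_i}\Omega(P_j)=(\partial_i\Omega)(P_j)$ for $i\neq j$. The restriction $i\neq j$ is essential: for $i=j$ the extra term $\partial_i\tau_i=-\tfrac12\tau_i^{-1}$ is precisely what produces the double pole of \eqref{doublepole}.

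Combining the two ingredients, for $i\neq j$ one gets $\frac{\partial}{\partial u_i}\Omega(P_j)=(\partial_i\Omega)(P_j)=\Omega(P_i)\,\theta_i(P_j)$. Substituting into the two sides of \eqref{gammaeq} produces $\Omega_1(P_i)\,\Omega_2(P_i)\,\theta_i(P_j)$ on both, so the identity holds; the only $\Omega$-dependence left is the symmetric scalar $\Omega_1(P_i)\Omega_2(P_i)$, while $\theta_i(P_j)$ is purely geometric. The claim about rotation coefficients is then immediate: since $\sqrt{\eta_i}=\phi(P_i)/\sqrt2$ by \eqref{etai}, we have $\gamma_{ij}=\frac{\partial}{\partial u_i}\phi(P_j)/\phi(P_i)=\theta_i(P_j)$, manifestly independent of $\phi$; equivalently, applying \eqref{gammaeq} to a second quasi-momentum $\psi$ gives $\frac{\partial}{\partial u_i}\psi(P_j)/\psi(P_i)=\frac{\partial}{\partial u_i}\phi(P_j)/\phi(P_i)$ directly. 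I expect the only genuinely delicate point to be a rigorous justification of the commutation step, namely controlling the moduli-dependence of the non-horizontal local parameter and of the Taylor coefficients of the multivalued form $\Omega$ around the moving point $P_j$; the uniqueness argument for $\theta_i$ and the final substitution are then routine.
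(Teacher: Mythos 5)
Your proof is correct and follows essentially the same route as the paper: the paper's own proof is only a pointer to \cite{Du-Diff}, and the argument there is precisely your factorization $\partial_i\Omega=\Omega(P_i)\,\theta_i$ (a Rauch-type variational formula, where $\theta_i$ is the normalized second-kind differential with principal part $\tfrac12\tau_i^{-2}d\tau_i$, i.e.\ one half of the Bergman bidifferential evaluated at $P_i$), obtained from the definition of horizontality, the local computation \eqref{doublepole}, and uniqueness of differentials with prescribed principal parts and vanishing $a$-periods. The symmetry of \eqref{gammaeq} and the $\phi$-independence $\gamma_{ij}=\theta_i(P_j)$ then follow exactly as you state, so your write-up is in effect a self-contained version of the proof the paper delegates to Dubrovin.
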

\begin{proof}
Again, the proof of \cite{Du-Diff} applies almost verbatim to the case of double Hurwitz spaces.
\end{proof}
We will prove Theorem \ref{flatgeom} by explicitly constructing the flat coordinates of the three metrics. They result from the action of the infinitesimal automorphisms of $\mathbb{P}^1$ on the Hurwitz space: if $\mu, \nu$ are both trivial (i.e. equal to $(1, \dots, 1)$), any infinitesimal automorphism of $\mathbb{P}^1$ preserves the type of the Hurwitz cover, and we have an action of $sl(2, \mathbb{C})$ on $\mathcal{M}$. On the other hand, if the prescribed tangency condition at one point is nontrivial, only the subalgebra fixing that point acts on $\mathcal{M}$.\\
Introduce the following $sl(2, \mathbb{C})$ algebra of vector fields on the universal curve:
\begin{equation}\label{DeEe}
D_e\doteq \frac{\partial}{\partial\lambda}+\partial_e,\qquad \partial_e=\sum_{i=1}^n\frac{\partial}{\partial u_i}
\end{equation}
$$
D_E\doteq \lambda\frac{\partial}{\partial\lambda}+\partial_E,\qquad \partial_E=\sum_{i=1}^nu_i\frac{\partial}{\partial u_i}
$$
$$
D_{\tilde{e}}\doteq\lambda^2\frac{\partial}{\partial\lambda}+\partial_{\tilde{e}} ,\qquad \partial_{\tilde{e}}=\sum_{i=1}^nu_i^2\frac{\partial}{\partial u_i}
$$
As usual we let the vector fields on the universal curve act on horizontal differentials as the Lie derivative.
\begin{Prop}\label{HT0T}
\begin{itemize}\item[]
\item[(i)] For all $\Omega\in H(\mathcal{M})$, $D_E\Omega\in H (\mathcal{M})$ and we have the formula
\begin{equation}\label{LeibnitzE}
\partial_E\langle \Omega_1, \Omega_2\rangle=\langle D_E\Omega_1, \Omega_2\rangle+\langle \Omega_1, D_E\Omega_2\rangle+\emph{const}
\end{equation}
Moreover, $D_E$ has a $n$-dimensional kernel $H_0(\mathcal{M})=\{\Omega\in H(\mathcal{M}); D_E\Omega=0\}$, which is spanned by
\begin{itemize}
\item $\Phi^{(0)}_a\; a=2, \dots, K$.
\item $\Psi^{(0)}_b\; b=2, \dots, L$.
\item $\rho_\alpha^{(0)},\;, \alpha=1, \dots, g$.
\item $\omega_\alpha, \; \alpha=1, \dots, g$.
\end{itemize}
\item[(ii)] Let $\mu$ be trivial. Then for all $\Omega\in H(\mathcal{M})$, $D_e\Omega\in H(\mathcal{M})$ and we have the formula
\begin{equation}\label{Leibnitze}
\partial_e\langle \Omega_1, \Omega_2\rangle=\langle D_e\Omega_1, \Omega_2\rangle+\langle \Omega_1, D_e\Omega_2\rangle+\emph{const}
\end{equation}
Moreover, $D_e$ has a $n$-dimensional kernel  $H_T(\mathcal{M})=\{\Omega\in H(\mathcal{M}); D_e\Omega=0\}$, which is spanned by
\begin{itemize}
\item $\Omega^{(0)}_{b, s}, \; s=1, \dots, n_b-1, \;b=1, \dots, L$.
\item $\Omega^{(1)}_b,\; b=2, \dots, L$.
\item $\Psi^{(0)}_b\; b=2, \dots, L$.
\item $\rho_\alpha^{(1)}, \;\alpha=1, \dots, g$.
\item $\omega_\alpha, \; \alpha=1, \dots, g$.
\end{itemize}
\item[(iii)] Let $\nu$ be trivial. Then for all $\Omega\in H(\mathcal{M})$, $D_{\tilde{e}}\Omega\in H(\mathcal{M})$ and we have the formula
\begin{equation}\label{Leibnitztildee}
\partial_{\tilde{e}}\langle \Omega_1, \Omega_2\rangle=\langle D_{\tilde{e}}\Omega_1, \Omega_2\rangle+\langle \Omega_1, D_{\tilde{e}}\Omega_2\rangle+\emph{const}
\end{equation}
Moreover, $D_{\tilde{e}}$ has a $n$-dimensional kernel  $H_{\tilde{T}}(\mathcal{M})=\{\Omega\in H(\mathcal{M}); D_{\tilde{e}}\Omega=0\}$, which is spanned by
\begin{itemize}
\item $\Omega^{(0)}_{a, r}, \; r=1, \dots, m_a-1, \;a=1, \dots, K$.
\item $\Omega^{(-1)}_a,\; a=2, \dots, K$.
\item $\Phi^{(0)}_a\; a=2, \dots, K$.
\item $\rho_\alpha^{(-1)}, \;\alpha=1, \dots, g$.
\item $\omega_\alpha, \; \alpha=1, \dots, g$.
\end{itemize}
\end{itemize}
\end{Prop}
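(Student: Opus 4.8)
The three statements are parallel, so I would treat (i) in full and obtain (ii), (iii) by the same argument under the respective triviality hypotheses. The guiding principle is that $D_E$, $D_e$, $D_{\tilde e}$ are the infinitesimal generators of the standard $sl(2,\mathbb{C})$-action on the base $\mathbb{P}^1$ (rescaling, translation, and the special conformal transformation fixing $0$), lifted horizontally to the universal curve. The flow of $D_E$ is exactly the rescaling $(C,\lambda)\mapsto(C,e^s\lambda)$ of \eqref{resc}, which always preserves the ramification type; the flow of $D_e$ is $\lambda\mapsto\lambda+s$, which preserves the type precisely when the zeros of $\lambda$ are simple, i.e. when $\mu$ is trivial; and symmetrically for $D_{\tilde e}$ and $\nu$.

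I would first establish $D_E\Omega\in H(\mathcal{M})$. Since the rescaling flow permutes the level sets $\lambda=\mathrm{const}$, it preserves the horizontal foliation, hence sends horizontal differentials to horizontal differentials, so the Lie derivative $D_E$ preserves $H(\mathcal{M})$. As the concrete computational confirmation I would expand $D_E\Omega=d\!\left(\lambda\,i_{\partial_\lambda}\Omega\right)+\sum_i u_i\,\partial_i\Omega$ near a ramification point $P_i$ in the local parameter $\tau=(\lambda-u_i)^{1/2}$. Writing $\Omega=(\Omega(P_i)+O(\tau))\,d\tau$ and using $d\lambda=2\tau\,d\tau$, the vertical part $d\!\left(\lambda\,i_{\partial_\lambda}\Omega\right)$ produces a double pole with coefficient $-u_i\Omega(P_i)/2$, while the $j=i$ term of $\sum_j u_j\,\partial_j\Omega$ contributes the double pole $+u_i\Omega(P_i)/2$ by \eqref{doublepole}; these cancel, and all $j\neq i$ terms are regular at $P_i$. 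Thus $D_E\Omega$ is again admissible with moduli-independent data, i.e. horizontal.

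For the Leibniz identity \eqref{LeibnitzE} I would differentiate both sides along $\partial_k$. By \eqref{resfor} and $\partial_E=\sum_i u_i\partial_i$, the left side becomes $\partial_k\sum_i u_i\,\tfrac12\Omega_1(P_i)\Omega_2(P_i)$, while (applying \eqref{resfor} to the pairs $(D_E\Omega_1,\Omega_2)$ and $(\Omega_1,D_E\Omega_2)$, now legitimate since $D_E$ preserves $H(\mathcal{M})$) the right side becomes $\tfrac12\bigl[(D_E\Omega_1)(P_k)\Omega_2(P_k)+\Omega_1(P_k)(D_E\Omega_2)(P_k)\bigr]$. The two agree once one records the value $(D_E\Omega)(P_k)$ from the local expansion above, so the difference of the two sides has vanishing $\partial_k$-derivative for every $k$ and is therefore constant; this is the unavoidable \emph{const}, reflecting the additive ambiguity of the pairing.

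The main work, and the step I expect to be the genuine obstacle, is the kernel. As the flow is a one-parameter group of scaling automorphisms, $D_E$ acts diagonalizably on $H(\mathcal{M})$ and $H_0(\mathcal{M})$ is its zero-eigenspace. I would assign to each generator of Lemma \ref{basis} its weight under $\lambda\mapsto e^s\lambda$: the second-kind differentials at $z_a$, $p_b$ carry weights $-k-r/m_a$ and $k+s/n_b$, the differentials $\Omega_a^{(-k)},\Omega_b^{(k)}$ weights $\mp k$, the logarithmic $\Phi_a^{(k)},\Psi_b^{(k)}$ and the jump differentials $\sigma_\alpha^{(k)},\rho_\alpha^{(k)}$ weight $k$ (the $s$-shift of $\log\lambda$ landing in the covariantly constant part, hence vanishing in $H(\mathcal{M})$), while $\Phi_a^{(0)},\Psi_b^{(0)},\omega_\alpha$ have weight $0$. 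This shows the listed generators lie in $H_0(\mathcal{M})$, and a direct count gives $(K-1)+(L-1)+g+g=n$. The delicate point is \emph{exhaustion}: one must show these are the only weight-zero classes, and in particular account for the apparently weight-zero multivalued differentials $\sigma_\alpha^{(0)}$ carrying jumps along the $a$-cycles. Here quotienting by covariantly constant differentials is essential, since a form $\propto\lambda^{k-1}\log\lambda\,d\lambda$ carries jumps along \emph{every} cycle proportional to $\lambda^{k-1}d\lambda$, so that in $H(\mathcal{M})$ the $a$-cycle jump data are tied to the $b$-cycle jumps and to the $\omega_\alpha$. Carefully bookkeeping these relations in the weight-zero sector, so as to collapse the naive count to exactly the stated $n$, is the crux; once this is done for $D_E$, the cases $D_e$ and $D_{\tilde e}$ follow by the identical analysis with the borderline generators shifted to $k=1$ and $k=-1$, under the hypothesis that $\mu$, respectively $\nu$, is trivial.
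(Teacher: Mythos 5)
Your treatment of the first two claims is essentially sound: the cancellation of the double poles at the $P_i$ between the vertical part $d\bigl(\lambda\, i_{\partial_\lambda}\Omega\bigr)$ and the term $u_i\partial_i\Omega$ is a correct (and necessary) check, and deriving \eqref{LeibnitzE} by differentiating both sides and comparing via \eqref{resfor} is exactly what the paper means when it calls the Leibniz formulas easy consequences of that identity. One caveat: the reason you give for $D_E$ preserving $H(\mathcal{M})$ --- that its flow ``permutes the level sets $\lambda=\mathrm{const}$'' --- proves too much, since the translation flow also permutes level sets, yet $D_e$ fails to preserve $H(\mathcal{M})$ when $\mu$ is nontrivial. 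The operative point, and the one the paper actually checks, is the behaviour at the \emph{marked} points: $D_E$ sends the singularity data \eqref{ap} to data of the same form with moduli-independent coefficients, whereas $D_e\Omega$ acquires at $z_a$ a pole of order $m_a$ with principal part $\frac{1-m_a}{m_a}\tau^{-m_a}\Omega_0$, whose coefficient depends on the moduli, so both admissibility and horizontality fail unless $m_a=1$. Your opening paragraph shows you understand this mechanism, but the proof itself never performs the marked-point check, which is where the hypotheses on $\mu$ and $\nu$ enter.

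The genuine gap is in the kernel computation, at exactly the point you flag as ``the crux'' and then leave open --- and the mechanism you propose to close it is not the right one. First, $D_E$ is \emph{not} diagonalizable on $H(\mathcal{M})$: it has a Jordan block. Indeed $D_E\sigma_\alpha^{(0)}=\omega_\alpha$ (up to sign), not $0$. The jump datum of $\sigma_\alpha^{(0)}$ is scale-invariant, but the normalization of vanishing $a$-periods is not: the function $f\doteq i_{\lambda\partial_\lambda}\sigma_\alpha^{(0)}=\lambda\,\sigma_\alpha^{(0)}/d\lambda$ increases by $i_{\lambda\partial_\lambda}(d\log\lambda)=1$ under continuation around $a_\alpha$, so $\oint_{a_\beta}D_E\sigma_\alpha^{(0)}=\oint_{a_\beta}df=\delta_{\alpha\beta}$; since $D_E\sigma_\alpha^{(0)}$ has no jumps and no singularities, it must equal $\omega_\alpha$. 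This single computation is what eliminates the $g$ spurious ``weight-zero'' classes and collapses the count to $n$; note the asymmetry with $D_E\rho_\alpha^{(0)}=0$, which comes precisely from the fact that the normalization is imposed on $a$-periods. Second, quotienting by covariantly constant differentials cannot do this job: a covariantly constant differential such as $\tfrac12 d\bigl((\log\lambda)^2\bigr)$ has jumps along \emph{every} cycle, proportional to the winding number of $\lambda$ along that cycle, so modding it out can never trade a jump supported on a single $a_\alpha$ for $b$-cycle or holomorphic data. The paper's (terse) proof consists of exactly the computation you are missing: apply $D_E$ (resp. $D_e$, $D_{\tilde e}$) to each basis element of Lemma \ref{basis}; the action is triangular, with diagonal entries $-k-r/m_a$, $k+s/n_b$, $\mp k$, $k$ and with the off-diagonal contributions $\sigma_\alpha^{(0)}\mapsto\omega_\alpha$ (resp. $\sigma_\alpha^{(\pm1)}\mapsto\omega_\alpha$), whence the kernel is spanned by precisely the $n$ listed elements and nothing else.
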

\begin{proof}Consider a horizontal differential $\Omega$ near a marked point, say a zero $z_a$ of $\lambda$. For simplicity let us assume that it is holomorphic there (the general case is treated in the same way):
$$
\Omega =(\Omega_0+O(\tau))d\tau\qquad \tau=\lambda^{1/m_a}
$$
where $\Omega_0$ is in general a function of $u_1, \dots, u_n$. It il clear that
$$
D_E\Omega=\left(\partial_E\Omega_0+\frac{1}{m_a}\Omega_0+O(\tau)\right)d\tau
$$
is still holomorphic at $z_a$. On the other hand
$$
D_e\Omega=\left(\frac{1-m_a}{m_a}\tau^{-m_a}\Omega_0+O(\tau^{1-m_a})\right)d\tau
$$
has a pole at $z_a$ unless $m_a=1$. Moreover, the principal part will as a rule depend on the moduli, which means that the obtained differential is not horizontal.\\
Repeating the same argument at a pole of $\lambda$ shows that $D_E$ automatically preserves horizontality, while $D_{\tilde{e}}$ only does if $n_b=1$ for all $b$.\\
The ``Leibnitz-type" formulas are easy consequences of \eqref{resfor}.\\
Finally the statements about the kernels of $D_e, D_E, D_{\tilde{e}}$ are checked immediately by looking at the action of the three operators on the basis elements of Lemma \ref{basis}.
\end{proof}
\begin{proof}[Proof of Theorem \ref{flatgeom}]
Let us prove that $g$ is flat by constructing explicitly flat coordinates. Let $\psi_1, \dots, \psi_n$  be a basis of $H_0(\mathcal{M})$ and $f_k\doteq \langle \psi_k, \phi\rangle$. Then, using \eqref{resfor} and Proposition 9, 
\begin{align*}
g^{-1}(df_k, df_l)	&= \sum_{i=1}^n g_i^{-1}(u)
\partial_if_k\partial_if_l=\\
					&=\sum_{i=0}^n\Big(\frac{2u_i}{\phi(P_i)^2}\Big)\Big(\frac{\phi(P_i)\psi_k(P_i)}{2}\Big)\Big(\frac{\phi(P_i)\psi_l(P_i)}{2}\Big)=\\
					&=\sum_{i=0}^nu_i\frac{\psi_k(P_i)\psi_l(P_i)}{2}=\partial_E\langle\psi_k, \psi_l\rangle= \text{const}					
\end{align*}
since $D_E\psi_k=D_E\psi_l=0$.\\
Note that the $f_k$ are independent functions. Indeed, assume $\partial_i\langle\psi, \phi\rangle=0$ for all $i=1, \dots, n$ for some $\psi\in H_0(\mathcal{M})$. Because of \eqref{resfor} and the fact that $\phi(P_i)\neq 0$ on $\mathcal{M}$ by assumption, this would imply $\psi(P_i)=0$ and therefore $\partial_i\psi=0$ for all $i$ (see \eqref{doublepole}). Thus $\psi$ would be covariantly constant and therefore zero in $H(\mathcal{M})$. This proves that the $f_k$ provide a set of flat coordinates for $g$.\\
Flatness of $\eta$ and $\tilde{\eta}$ are proved in the same way: the flat coordinates are given again by $f=\langle\psi, \phi\rangle$ for $\psi\in H_T(\mathcal{M}), H_{\tilde{T}}(\mathcal{M})$ respectively, whenever these spaces are defined.\\
For the compatibility statement, recall that for a diagonal metric
$$
g=\sum_{i=1}^ng_i du_i^2, 
$$
the flatness property is expressed in terms of the rotation coefficients in the form
$$
\partial_i\gamma_{jk}=\gamma_{ji}\gamma_{ik}\qquad i, j, k, \text{distinct}
$$
$$
\partial_i\gamma_{ij}+\partial_j\gamma_{ji}+\sum_{k\neq i, j}\gamma_{ki}\gamma_{kj}=0
$$
Given a second diagonal metric of the form
$$
\tilde{g}=\sum_{i=1}^n\frac{g_i}{f_i}du_i^2, 
$$
where the $f_i=f_i(u_i)$ are functions of the single variable $u_i$, the first part of the above system is automatically satisfied for $\tilde{g}$. The second part reads
$$
\frac{1}{2}f_i'\gamma_{ij}+\frac{1}{2}f_j'\gamma_{ji}+f_i\partial_i\gamma_{ij}+f_j\partial_j\gamma_{ji}+\sum_{k\neq i, j}f_k\gamma_{ki}\gamma_{kj}=0
$$
which is linear in the $f_i$. As a consequence any two solutions automatically give rise to a flat pencil.
\end{proof}
As a by-product of the proof we obtained:
\begin{Cor}\label{flatcoords}
\begin{itemize}\item[]
\item[(i)] A basis of flat coordinates of $g$ is given by
\begin{equation}\label{Hperiods}
p_a\doteq\langle\Omega_a, \phi\rangle
\end{equation}
where $\Omega_a$ runs through a basis of $H_0(\mathcal{M})$.
\item[(ii)] Let $\mu$ be trivial. A basis of flat coordinates of $\eta$ is given by
\begin{equation}\label{Htalpha}
t_\alpha\doteq\langle\Omega_\alpha, \phi\rangle
\end{equation}
where $\Omega_\alpha$ runs through a basis of $H_T(\mathcal{M})$.
\item[(iii)] Let $\nu$ be trivial. A basis of flat coordinates of $\tilde{\eta}$ is given by
\begin{equation}
s_\alpha\doteq\langle\Omega_\alpha, \phi\rangle
\end{equation}
where $\Omega_\alpha$ runs through a basis of $H_{\tilde{T}}(\mathcal{M})$.
\end{itemize}
\end{Cor}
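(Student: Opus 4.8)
The plan is to read the three flat coordinate systems directly off the proof of Theorem \ref{flatgeom}, since that argument establishes flatness of each metric precisely by exhibiting the functions $\langle\Omega,\phi\rangle$ as flat coordinates. For part (i) I would set $p_a=\langle\Omega_a,\phi\rangle$ with $\{\Omega_a\}$ a basis of $H_0(\mathcal{M})$ and verify two things: that the contravariant components of $g$ in these functions are constant, and that the $p_a$ are functionally independent, so that, being $n=\dim\mathcal{M}$ in number, they constitute a genuine coordinate system.

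For the first point I would invoke the residue formula \eqref{resfor}, which gives $\partial_i p_a=\tfrac12\Omega_a(P_i)\phi(P_i)$, together with the diagonal form $g=\sum_i(\eta_i/u_i)\,du_i^2$, so that $g^{ii}=2u_i/\phi(P_i)^2$. Then
\[
g^{-1}(dp_a,dp_b)=\sum_i g^{ii}\,\partial_i p_a\,\partial_i p_b=\sum_i u_i\,\frac{\Omega_a(P_i)\Omega_b(P_i)}{2}=\partial_E\langle\Omega_a,\Omega_b\rangle,
\]
where the last equality is again \eqref{resfor} applied with $\partial_E=\sum_i u_i\,\partial/\partial u_i$. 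By Proposition \ref{HT0T}(i) this equals $\langle D_E\Omega_a,\Omega_b\rangle+\langle\Omega_a,D_E\Omega_b\rangle+\text{const}$, which is constant since $\Omega_a,\Omega_b\in H_0(\mathcal{M})=\ker D_E$. A metric whose contravariant components are constant in some functions is flat with those functions as affine coordinates, so this simultaneously reproves flatness and identifies the $p_a$ as flat.

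The functional independence is the only point that genuinely requires an argument, and it is where I expect the (modest) obstacle to lie. I would argue as in the proof of the Theorem: if $\sum_a c_a\,dp_a=0$, then $\psi=\sum_a c_a\Omega_a\in H_0(\mathcal{M})$ satisfies $\partial_i\langle\psi,\phi\rangle=0$ for all $i$; by \eqref{resfor} and $\phi(P_i)\neq0$ on $\mathcal{M}_\phi$ this forces $\psi(P_i)=0$, hence $\partial_i\psi=0$ by \eqref{doublepole}, so $\psi$ is covariantly constant and thus vanishes in $H(\mathcal{M})$. As the $\Omega_a$ form a basis of $H_0(\mathcal{M})$, all $c_a=0$, so the $dp_a$ are linearly independent; combined with $\dim H_0(\mathcal{M})=n$ this yields a full flat coordinate system.

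Finally, parts (ii) and (iii) are verbatim repetitions of the same computation with the data $(g,H_0,D_E,\partial_E)$ replaced by $(\eta,H_T,D_e,\partial_e)$ and by $(\tilde\eta,H_{\tilde T},D_{\tilde e},\partial_{\tilde e})$ respectively, invoking parts (ii) and (iii) of Proposition \ref{HT0T} in place of (i). The triviality hypotheses on $\mu$ (resp. $\nu$) are exactly what guarantee in those parts that $D_e$ (resp. $D_{\tilde e}$) preserves horizontality and that the corresponding kernel is again $n$-dimensional, so that the independence count goes through unchanged.
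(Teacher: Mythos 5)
Your proposal is correct and follows essentially the same route as the paper: the Corollary is extracted as a by-product of the proof of Theorem \ref{flatgeom}, which establishes flatness of each metric exactly by computing $g^{-1}(dp_a,dp_b)=\partial_E\langle\Omega_a,\Omega_b\rangle=\text{const}$ via \eqref{resfor} and the Leibniz formula of Proposition \ref{HT0T}, and then proves functional independence by the same covariantly-constant argument you give. Your treatment of parts (ii) and (iii) by substituting $(\eta,H_T,D_e,\partial_e)$ and $(\tilde\eta,H_{\tilde T},D_{\tilde e},\partial_{\tilde e})$ likewise matches the paper's ``proved in the same way'' remark.
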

\section{Frobenius structures and Whitham-type hierarchies}
Theorem \ref{flatgeom} naturally leads to the following generalization of Dubrovin's construction of Frobenius structure on (single) Hurwitz spaces \cite{Du-Diff,Du-2dtft}:
\begin{Thm}\label{Frobst} Let the quasi-momentum $\phi$ be homogeneous, in the sense that
$$
D_E\phi=\frac{1-d}{2}\phi
$$
for some constant $d$.
\begin{itemize}
\item[(i)] The metric $g$, together with the multiplication rule of vector fields
\begin{equation}\label{2ndssprod}
\frac{\partial}{\partial u_i}\star\frac{\partial}{\partial u_j}=\delta_{ij}\frac{1}{u_i}\frac{\partial}{\partial u_i}
\end{equation}
endows $\mathcal{M}$ with a semisimple dual-type Frobenius structure of charge $d$.
\item[(ii)] Let $\mu$ be trivial and $\phi\in H_T(\mathcal{M})$. Then the metric $\eta$ and the product
$$
\frac{\partial}{\partial u_i}\cdot\frac{\partial}{\partial u_j}=\delta_{ij}\frac{\partial}{\partial u_i}
$$
define a semisimple conformal Frobenius structure of charge $d$ on $\mathcal{M}$, and the dual-type structure (i) coincides with the almost-dual one.
\item[(iii)] If $\mu$ and $\nu$ are both trivial, then the conformal structure of (ii) is tri-hamiltonian.
\end{itemize}
\end{Thm}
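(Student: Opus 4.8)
The three statements are facets of a single geometry, so I would organise the argument around (i). In the canonical coordinates $u_1,\dots,u_n$ all the metrics \eqref{1stm}--\eqref{3rdm} and all three products are diagonal, so commutativity, associativity and the Frobenius identity $g(X\star Y,Z)=g(X,Y\star Z)$ are immediate, and one reads off that the unit of $\star$ is $\partial_E=\sum_i u_i\,\partial/\partial u_i$. Since $g$ is flat by Theorem~\ref{flatgeom}, the one substantial point in (i) is the flatness of $\nabla^{(z)}_XY=\nabla^{g}_XY+zX\star Y$. The cleanest route is to pass to the canonical coordinates of the $\star$-product: its idempotents are $u_i\,\partial/\partial u_i$, so the substitution $v_i=\log u_i$ turns \eqref{2ndssprod} into the standard product $\partial_{v_i}\star\partial_{v_j}=\delta_{ij}\partial_{v_i}$ and writes $g=\sum_i \eta_i u_i\,dv_i^2$, with unit $\sum_i\partial_{v_i}=\partial_E$. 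In these coordinates $(g,\star)$ is an ordinary semisimple datum, and by Dubrovin's criterion flatness of $\nabla^{(z)}$ reduces to the Darboux--Egoroff system for $g$. Its hypotheses all hold: after the change of variables the symmetry of the rotation coefficients reduces to $\partial_i\eta_j=\partial_j\eta_i$, which is automatic from $\eta_i=\partial_i\langle\phi,\phi\rangle$ (see \eqref{etai}); the Lamé equations and their companion relation come from flatness of $g$; and the $\partial_E$-invariance of the rotation coefficients follows formally from those two once the symmetry is known. Notably this uses neither $\mu$ nor $\nu$ trivial, which is why (i) holds in full generality.

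It then remains to check the dual-type condition \eqref{dualtype}. By Corollary~\ref{flatcoords} the flat coordinates of $g$ are $p_a=\langle\Omega_a,\phi\rangle$ with $\Omega_a$ ranging over a basis of $H_0(\mathcal M)$; applying the Leibniz formula \eqref{LeibnitzE} together with $D_E\Omega_a=0$ (Proposition~\ref{HT0T}) and the homogeneity $D_E\phi=\tfrac{1-d}2\phi$ gives $\partial_E p_a=\tfrac{1-d}2\, p_a+\mathrm{const}$. For $d\neq1$ the constant is absorbed by a shift of $p_a$ (still a flat coordinate), and for $d=1$ it is precisely the flatness of the unit. As $\partial_E$ is the unit of $\star$, this is \eqref{dualtype} with $\partial_e$ there understood as $\partial_E$, and the charge is $d$; this completes (i).

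For (ii) I would verify the conformal axioms of part (C) directly. With $\mu$ trivial, $\eta$ is flat (Theorem~\ref{flatgeom}) and Egoroff; the canonical product has unit $\partial_e=\sum_i\partial/\partial u_i$, and the hypothesis $\phi\in H_T(\mathcal M)$, i.e. $D_e\phi=0$, forces $\partial_e t_\alpha=\mathrm{const}$ for the flat coordinates $t_\alpha=\langle\Omega_\alpha,\phi\rangle$ through \eqref{Leibnitze}, so $\partial_e$ is $\nabla$-flat. Homogeneity of $\phi$ gives the conformal rescaling $\mathcal L_{\partial_E}\eta=(2-d)\eta$, equivalently $\partial_E\eta_i=-d\,\eta_i$: the $D_E$-weight of $\phi(P_i)$ is $\tfrac{1-d}2-\tfrac12=-\tfrac d2$, because the local parameter $d\tau$, $\tau=(\lambda-u_i)^{1/2}$, carries weight $\tfrac12$. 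Covariant constancy of $\nabla\partial_E$ is the linearity of $\partial_E t_\alpha$ in the flat coordinates, and its diagonalizability amounts to semisimplicity of the $D_E$-action on $H_T(\mathcal M)$. Finally $\mathcal U\,\partial/\partial u_i=u_i\,\partial/\partial u_i$, so $g=\eta(\cdot,\mathcal U^{-1}\cdot)$ and $\partial/\partial u_i\star\partial/\partial u_j=\mathcal U^{-1}\partial/\partial u_i\cdot\partial/\partial u_j$; this is exactly \eqref{star}, so the structure of (i) is the almost-dual of (ii) in the sense of \cite{Du-Alm}.

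For (iii), triviality of both $\mu$ and $\nu$ forces $K=L$, so $n=2g-2+K+L$ is even, and $\tilde\eta$ is flat as well (Theorem~\ref{flatgeom}); the third metric and product $\ast$ then arise as in \cite{Rom}. The remaining task is the spectrum of the grading operator \eqref{gradingOP}. Reading off the $D_E$-weights on the basis of $H_T(\mathcal M)$ of Proposition~\ref{HT0T}(ii) with $m_a=n_b=1$, the elements $\Omega^{(1)}_b,\rho^{(1)}_\alpha$ have weight $1$ while $\Psi^{(0)}_b,\omega_\alpha$ have weight $0$, so $\nabla\partial_E$ has exactly two eigenvalues on the flat coordinates, each of multiplicity $(L-1)+g=n/2$. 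Hence $\hat\mu$ has exactly two eigenvalues; being $\eta$-antisymmetric they are opposite, and since the unit direction always realises an extreme weight they are $\pm d/2$, which is the tri-hamiltonian condition. I expect the genuine difficulty to lie in (i): establishing flatness of $\nabla^{(z)}$ in the non-conformal generality, where there is no ambient Frobenius manifold to dualise, for which the passage to the coordinates $v_i=\log u_i$ is the key device and some care is needed to confirm that the Egoroff property and the quasi-homogeneity survive that change. A secondary delicate point, already present in Dubrovin's original construction, is the diagonalizability of $\nabla\partial_E$ in (ii), i.e. the absence of resonances in the $D_E$-action on the flat coordinates.
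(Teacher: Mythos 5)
Your proposal is correct, and its skeleton is the paper's own: the dual-type condition in (i) via homogeneity of the flat coordinates $p_a=\langle\Omega_a,\phi\rangle$ from \eqref{LeibnitzE} together with $D_E\Omega_a=0$; part (ii) as Dubrovin's construction; and part (iii) via the splitting of the flat coordinates of $\eta$ into two weight blocks (weights $0$ and $1$ on $H_T(\mathcal M)$ when $\mu,\nu$ are trivial, hence degrees $\frac{1-d}{2}$ and $\frac{3-d}{2}$) each of size $(L-1)+g=n/2$. Where you genuinely go beyond the paper is the flat-pencil axiom in (i): the paper's proof only says the product \eqref{2ndssprod} is ``obviously associative and compatible with $g$'' and never checks that $\hat\nabla+z\,\star$ is flat, implicitly deferring to the standard semisimple Darboux--Egoroff machinery; you make this explicit, and your device $v_i=\log u_i$ is the right one. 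In fact the Egoroff property is even cleaner than you state: in these coordinates the metric coefficients are $u_i\eta_i=\partial_{v_i}\langle\phi,\phi\rangle$, so $\langle\phi,\phi\rangle$ itself is the Egoroff potential of $g$, and flatness of $g$ (Theorem~\ref{flatgeom}) plus this symmetry closes the argument. Likewise in (iii) the paper stops at ``two blocks of the same degree''; your extra step — the eigenvalues of $\hat\mu$ are opposite by $\eta$-antisymmetry and pinned to $\pm d/2$ by the unit eigenvector — is what actually matches part (T) of the definition, and it is equivalent to the observation (implicit in the paper) that a homogeneous $\phi\in H_T(\mathcal M)$ has weight $0$ or $1$, forcing $d=\pm1$ so that $\pm\frac12=\pm\frac d2$.

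Two small caveats, neither fatal. First, your side remark that ``the $\partial_E$-invariance of the rotation coefficients follows formally from'' flatness and symmetry is unjustified as stated (the scaling property of the rotation coefficients is a separate input, tied to \eqref{gammaeq} and homogeneity, not a formal consequence of the Darboux--Egoroff equations); but since nothing in (i) uses it, your proof is unaffected. Second, in (ii) the flatness of $\partial_e$ needs both terms $\langle D_e\Omega_\alpha,\phi\rangle$ and $\langle\Omega_\alpha,D_e\phi\rangle$ in \eqref{Leibnitze} to vanish, i.e.\ it uses $\Omega_\alpha\in H_T(\mathcal M)$ as well as $\phi\in H_T(\mathcal M)$; both hold by Corollary~\ref{flatcoords}, so this is only a matter of phrasing.
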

\begin{proof}
We already proved that the intersecton form $g$ is flat, and the product \eqref{2ndssprod} is obviously associative and compatible with $g$. It remains to check that the flat coordinates of $g$ are homogeneous functions of degree $(1-d)/2$ of the canonical coordinates. This follows immediately from the explicit formula \eqref{Hperiods} and \eqref{LeibnitzE}.\\
Part (ii) is just a restatement of Dubrovin's construction.\\
To prove (iii), let us consider the homogeneity degrees of the flat coordinates \eqref{Htalpha} of $\eta$. Using Proposition 9 and the fact that $n_b=1$ by assumption, we find
\begin{align*}
&\partial_E \langle\phi, \Omega_\alpha\rangle=\frac{1-d}{2}\langle\phi, \Omega_\alpha\rangle \hspace{70pt}\text{for}\;\; \Omega_\alpha= \Psi^{(0)}, \;\omega\\
&\partial_E\langle\phi, \Omega_\alpha\rangle=\frac{3-d}{2}\langle\phi, \Omega_\alpha\rangle \hspace{70pt}\text{for}\;\; \Omega_\alpha=\Omega^{(1)}, \rho^{(1)}
\end{align*}
up to constants. Thus the flat coordinates split into two blocks of the same degree. This completes the proof.
\end{proof}
\begin{Rem} As we mentioned, the case (ii) recovers the standard construction, with the slight difference that the discriminant is automatically excluded in our setting (the canonical coordinates are non-zero by construction).  It is well-known (\cite{Du-2dtft}, Appendix G) that each irreducible component of the discriminant is an hyperplane in the flat coordinates of the intersection form. The above result can be seen as a generalization of this fact to the sub-strata of the discriminant obtained by further degenerating the ramification over $0$.\\
The whole picture may also be interpreted from the point of view of the Frobenius submanifold theory developed by Strachan \cite{St1,St2}: part (iii) of the theorem says that in the big moduli space of ramified coverings of the projective line of fixed genus and degree, there is an open and dense subset possessing a tri-hamiltonian Frobenius structure. The various double Hurwitz space can be viewed as \emph{natural submanifolds} obtained as the intersection of certain caustics and discriminants (clearly, the roles of $0$ and $\infty$ are interchangeable). While the restriction to these strata of the two metrics $\eta, \tilde{\eta}$ is in general curved, the restriction of $g$ is always flat. This leads to the dual-type Frobenius structure on each of the strata.
\end{Rem}
\vspace{5pt}
\begin{Rem} By a standard argument, the defintion of the metrics $\eta, g, \tilde{\eta}$ and of the respective products $\cdot, \; \star, \; \ast$ (see \eqref{ast}) can be recast in the form of Ginzburg-Landau-type formulas. Introduce the Abelian integral
$$
p(P)\doteq\text{p.v.}\int_{P_0}^P\phi
$$
where the principal value in defined by subtraction of the divergent part at the marked points in the natural local parameters. Then the map $\lambda$ can locally be represented as a function of the variable $p$, depending on the moduli as parameters; the local function $\lambda(p)$ is called the \emph{superpotential} of the Hurwitz space. We have:
\begin{align}\label{super1st}
\eta(X, Y)=&-\sum_{P=z_a, p_b}\text{Res}_P\frac{X(\lambda(p))Y(\lambda(p))}{d\lambda(p)}\phi^2\\
\eta(X, Y\cdot Z)=&-\sum_{P=z_a, p_b}\text{Res}_P\frac{X(\lambda(p))Y(\lambda(p))Z(\lambda(p))}{d\lambda(p)}\phi^2\nonumber
\end{align}
\begin{align}\label{super2nd}
g(X, Y)=&-\sum_{P=z_a, p_b}\text{Res}_P\frac{X(\log\lambda(p))Y(\log\lambda(p))}{d\log\lambda(p)}\phi^2\\
g(X, Y\star Z)=&-\sum_{P=z_a, p_b}\text{Res}_P\frac{X(\log\lambda(p))Y(\log\lambda(p))Z(\log\lambda(p))}{d\log\lambda(p)}\phi^2\nonumber
\end{align}
\begin{align}\label{super3rd}
\tilde{\eta}(X, Y)=&\sum_{P=z_a, p_b}\text{Res}_P\frac{X(1/\lambda(p))Y(1/\lambda(p))}{d(1/\lambda(p))}\phi^2\\
\tilde{\eta}(X, Y\ast Z)=&-\sum_{P=z_a, p_b}\text{Res}_P\frac{X(1/\lambda(p))Y(1/\lambda(p))Z(1/\lambda(p))}{d(1/\lambda(p))}\phi^2\nonumber
\end{align}
Here $X, Y, Z$ are arbitrary vector fields on the Hurwitz space, and their action on the superpotential is now to be understood at $p=\text{const}$ (see \cite{Du-2dtft} for details on the derivation).
\end{Rem}
We now proceed to the construction of a complete hydrodynamic hierarchy associated to the dual-type Frobenius structure on $\mathcal{M}$. It is based on the following
\begin{Prop} Let $\Omega\in H(\mathcal{M})$ and $h\doteq\langle\phi, \Omega\rangle$. We have
\begin{equation}\label{Hess}
\hat{\nabla}_i\hat{\nabla}_jh=\delta_{ij}\frac{1}{u_i}\frac{\partial}{\partial u_i}\langle\phi, D_E\Omega\rangle
\end{equation}
where $\hat{\nabla}$ is the Levi-Civita connection of the metric $g$.
\end{Prop}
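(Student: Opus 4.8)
We plan to verify \eqref{Hess} directly in the canonical coordinates $u_1,\dots,u_n$, exploiting the diagonal form of $g$ and the residue formula \eqref{resfor}. First I would record the gradient of $h$: applying \eqref{resfor} to $h=\langle\phi,\Omega\rangle$ gives $\partial_i h=\tfrac12\phi(P_i)\Omega(P_i)$. Writing $\phi_i=\phi(P_i)$, $\Omega_i=\Omega(P_i)$ and recalling $\eta_i=\phi_i^2/2$, the metric has diagonal components $g_i=\eta_i/u_i=\phi_i^2/(2u_i)$, so $\hat\nabla$ has the standard Christoffel symbols of a diagonal metric. The one nonstandard input is that all off-diagonal derivatives are governed by the rotation coefficients: by \eqref{gammaeq} the ratio $\partial_i\Omega(P_j)/\Omega(P_i)$ is the same for every horizontal differential, so $\partial_i\Omega_j=\gamma_{ij}\Omega_i$ for $i\neq j$ with $\gamma_{ij}=\partial_i\phi_j/\phi_i$ independent of $\Omega$; moreover $\gamma_{ij}=\gamma_{ji}$ because $\partial_j\eta_i=\partial_i\eta_j$ (commuting second derivatives of $\langle\phi,\phi\rangle$).

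For the off-diagonal components $i\neq j$ I would substitute $\partial_j\partial_i h=\tfrac{\gamma_{ji}}2(\phi_j\Omega_i+\phi_i\Omega_j)$ together with $\Gamma^i_{ij}=\gamma_{ji}\phi_j/\phi_i$ and $\Gamma^j_{ij}=\gamma_{ij}\phi_i/\phi_j$ into $\hat\nabla_i\hat\nabla_j h=\partial_i\partial_j h-\Gamma^i_{ij}\partial_i h-\Gamma^j_{ij}\partial_j h$. After cancellation this collapses to $\tfrac12(\gamma_{ji}-\gamma_{ij})\phi_i\Omega_j$, which vanishes by the symmetry of the rotation coefficients, producing the factor $\delta_{ij}$ on the right-hand side.

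The diagonal part $\hat\nabla_i\hat\nabla_i h$ requires $\partial_i^2 h=\tfrac12(\Omega_i\partial_i\phi_i+\phi_i\partial_i\Omega_i)$ together with $\Gamma^i_{ii}=\partial_i\phi_i/\phi_i-1/(2u_i)$ and $\Gamma^k_{ii}=-u_k\phi_i\gamma_{ki}/(u_i\phi_k)$; after the $\partial_i\phi_i$ terms cancel one is left with $\tfrac{\phi_i}{2u_i}\big(u_i\partial_i\Omega_i+\tfrac12\Omega_i+\sum_{k\neq i}u_k\gamma_{ki}\Omega_k\big)$. On the other hand, since $D_E\Omega\in H(\mathcal{M})$ by Proposition \ref{HT0T}(i), the residue formula gives $\tfrac1{u_i}\partial_i\langle\phi,D_E\Omega\rangle=\tfrac{\phi_i}{2u_i}(D_E\Omega)(P_i)$. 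Hence the whole statement reduces to the single identity
\[
(D_E\Omega)(P_i)=u_i\,\partial_i\Omega_i+\tfrac12\Omega_i+\sum_{k\neq i}u_k\,\gamma_{ki}\Omega_k .
\]

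Establishing this last formula is the main obstacle, and is where the local analysis near $P_i$ enters. I would expand $\Omega=f(\tau)\,d\tau$ in the natural parameter $\tau=(\lambda-u_i)^{1/2}$ and compute the Lie derivative along $D_E=\lambda\partial_\lambda+\partial_E$ term by term, extracting the $\tau^0$ coefficient at $P_i$. The contribution of $\partial_E=\sum_k u_k\partial_k$ splits: the terms with $k\neq i$ produce $\sum_{k\neq i}u_k\gamma_{ki}\Omega_k$ via the rotation relation, while the $k=i$ term must be handled carefully because $\tau$ itself depends on $u_i$ --- exactly the subtlety already visible in \eqref{doublepole}. The delicate point is that both $\lambda\partial_\lambda\Omega$ and $\partial_i\Omega$ individually carry singular pieces and a spurious $\tfrac12 u_i a_2$ contribution (where $a_2$ is the coefficient of $\tau^2$ in $f$), and one must check that these cancel between the two, leaving precisely the conformal-weight term $\tfrac12\Omega_i$. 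Once this cancellation is verified the diagonal match is immediate, completing the proof.
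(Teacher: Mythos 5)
Your proposal is correct and follows essentially the same route as the paper's own proof: the residue formula \eqref{resfor} for the gradient, the Christoffel symbols of the diagonal metric $g$, the identity \eqref{gammaeq} (plus symmetry of the rotation coefficients) for the off-diagonal cancellation, and the reduction of the diagonal entry to the identity $(D_E\Omega)(P_i)=\tfrac12\Omega(P_i)+\partial_E\Omega(P_i)$. The only place you go beyond the paper is in isolating and sketching the local computation behind this last identity --- the mutual cancellation of the singular parts and of the spurious $\tfrac12 u_i a_2$ contributions between $\lambda\partial_\lambda\Omega$ and $u_i\partial_i\Omega$ --- which the paper asserts without comment; your description of that cancellation is accurate.
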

\begin{proof}
For the off-diagonal elements of the covariant Hessian, we have to check that
\begin{equation}\label{diag}
\partial_i\partial_jh-\Gamma_{ij}^i\partial_ih-\Gamma_{ij}^j\partial_jh=0\qquad i\neq j
\end{equation}
where $\Gamma$ are the Christoffell symbols of the intersection form in the canonical coordinates (there is no summation over repeated indices in this formula). Using \eqref{resfor}, we compute
$$
\partial_i\partial_j h=\partial_i\left(\frac{\phi(P_j)\Omega(P_j)}{2}\right)=\frac{\partial_i\phi(P_j)\Omega(P_j)}{2}+\frac{\phi(P_j)\partial_i\Omega(P_j)}{2}
$$
$$
\Gamma_{ij}^i\partial_ih=\partial_j \log(g_i^{1/2})\partial_ih=\frac{\partial_j\phi(P_i)\Omega(P_i)}{2}
$$
$$
\Gamma_{ij}^j\partial_jh=\partial_i \log(g_j^{1/2})\partial_jh=\frac{\partial_i\phi(P_j)\Omega(P_j)}{2}
$$
Then diagonality of the Hessian follows from \eqref{gammaeq}.\\
For the diagonal elements, we compute
$$
\hat{\nabla}_i\hat{\nabla}_ih=\partial_i^2h-\sum_{j=1}^n\Gamma_{ii}^j\partial_jh
$$
$$
\Gamma_{ii}^j=-\frac{u_j}{u_i}\partial_i\log\phi(P_j),\;\; i\neq j\qquad\qquad\Gamma_{ii}^i=\partial_i\log\phi(P_i)-\frac{1}{2u_i}
$$
which yields
\begin{align*}
\hat{\nabla}_i\hat{\nabla}_ih	&= \partial_i\left(\frac{\phi(P_i)\Omega(P_i)}{2}\right)-\frac{\partial_i\phi(P_i)\Omega(P_i)}{2}+\\
			&+\frac{1}{2u_i}\frac{\phi(P_i)\Omega(P_i)}{2}+\sum_{j\neq i}\frac{u_j}{u_i}\frac{\partial_i\phi(P_j)\Omega(P_j)}{2}=\\
						&=\frac{1}{2}\left(\phi(P_i)\partial_i\Omega(P_i)+\frac{1}{2u_i}\phi(P_i)\Omega(P_i)+\sum_{j\neq i}\frac{u_j}{u_i}\phi(P_i)\partial_j\Omega(P_i)\right)=\\
						&=\frac{1}{u_i}\frac{\phi(P_i)(\frac{1}{2}\Omega(P_i)+\partial_E\Omega(P_i))}{2} = \frac{1}{u_i}\partial_i\langle\phi, D_E\Omega\rangle
\end{align*}
\end{proof}
Recall that the flat metric $g$ induces a Poisson bracket of hydrodynamic type on the formal loop space $\mathcal{LM}=\{X\in S^1\mapsto p(X)\in\mathcal{M}\}$:
\begin{equation}\label{2ndPoiss}
\{p^a(X), p^b(Y)\}=g^{ab}\delta'(X-Y)
\end{equation}
where $p^a$ are flat coordinates of $g$.
\begin{Cor}
\begin{itemize}
\item[]
\item[(i)] Let $I(\mathcal{M})\doteq\{h=\langle\phi, \Omega\rangle; \;\Omega\in H(\mathcal{M})\}$. Consider a local functional of hydrodynamic type with density with density in $I(\mathcal{M})$: 
\begin{equation}\label{locfun}
H[p]\doteq\int h(p)dX\qquad h=\langle\phi, \Omega\rangle\in I(\mathcal{M})
\end{equation}
Then its hamiltonian flow with respect to \eqref{2ndPoiss} can be written in the ``Flaschka-Forest-McLaughlin form" \cite{FFM}
\begin{equation}\label{FFM}
\partial_T\phi=\partial_X(D_E\Omega)
\end{equation}
\item[(ii)] Any two functionals of the form \eqref{locfun} are in involution with respect to \eqref{2ndPoiss}. Moreover, $I(\mathcal{M})$ is a complete family of conservation laws in the sense of Tsarev \cite{Ts}.
\item[(iii)] Assume $D_E\phi=0$ (i.e. d=1). Then for any pair of densities $h_i=\langle\phi, \Omega_i\rangle\in I(\mathcal{M}), \; i=1, 2$ we have
\begin{equation}
\partial_{T^1}\partial_Eh_2=\partial_X\langle D_E\Omega_1, D_E\Omega_2\rangle=\partial_{T^2}\partial_Eh_1
\end{equation}
where $\partial_{T^i}$ denotes the hamiltonian flow of the local functional with density $h_i$. In other words, $\partial_E$ acts as a tau-symmetry operator \cite{DuZh-Norm} in the space of conservation laws $I(\mathcal{M})$.
\item[(iv)] Let $H_k(\mathcal{M}), \; k\in\mathbb{Z}$ be the $n$-dimensional subspace of $H(\mathcal{M})$ consisting of homogeneous differentials of degree $k$,
\begin{equation}\label{Hk}
H_k(\mathcal{M})\doteq\{\Omega\in H(\mathcal{M});\;D_E\Omega=k\Omega\}
\end{equation}
Explicitly, for positive $k$ it is generated by
\begin{itemize}
\item $\Phi_a^{(k)}, \; a=2, \dots, K$.
\item $\Omega_b^{(k)}, \; b=2, \dots, L$.
\item $\sigma_\alpha^{(k)}, \; \alpha=1, \dots, g$.
\item $\rho_\alpha^{(k)}, \; \alpha=1, \dots, g$
\end{itemize}
and similarly for negative $k$, with $\Omega_a^{(-k)}$ instead of $\Phi_a^{(k)}$ and $\Psi_b^{(k)}$ instead of $\Omega_b^{(k)}$. Then, for all $\Omega\in H_k(\mathcal{M})$, $\langle\phi, \Omega\rangle$ is a flat function of the deformed connection
$$
\hat{\nabla}^{(z)}_X Y=\hat{\nabla}_X Y+z X\star Y
$$
for the value $z=k$ of the deformation parameter.
\end{itemize}
\end{Cor}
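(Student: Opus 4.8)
The plan is to reduce all four statements to the single structural input provided by the Proposition above, namely the diagonality of the covariant Hessian \eqref{Hess}, together with the residue formula \eqref{resfor}. First I would put the Hamiltonian flow of \eqref{locfun} into the standard Tsarev form. In the flat coordinates $p^a$ of $g$ the bracket \eqref{2ndPoiss} has constant coefficients, so the flow is $p^a_T = g^{ab}\partial_X\partial_b h$; rewriting this tensorially gives $u^i_T = (g^{ij}\hat\nabla_j\hat\nabla_k h)\,u^k_X$. By \eqref{Hess} the velocity matrix is already diagonal in canonical coordinates, and inserting $g^{ii} = 2u_i/\phi(P_i)^2$ together with \eqref{resfor} collapses it to the characteristic speeds $v_i = (D_E\Omega)(P_i)/\phi(P_i)$. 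This one formula is the engine for (i), (ii) and (iii).

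For (i) I would then read the flow as $\partial_T u_i = v_i\,\partial_X u_i$ and translate it into the Flaschka--Forest--McLaughlin form. Differentiating $\phi$ and $D_E\Omega$ along the curve at fixed $\lambda$ and applying the chain rule through the moduli, the only singular contributions of $\partial_T\phi$ and $\partial_X(D_E\Omega)$ near each $P_j$ come from the double pole \eqref{doublepole} of $\partial_j\phi$ and $\partial_j(D_E\Omega)$; substituting $v_j\phi(P_j) = (D_E\Omega)(P_j)$ shows that both differentials acquire the identical principal part $(\partial_X u_j)\,(D_E\Omega)(P_j)\,(2\tau^2)^{-1}d\tau$ at every $P_j$. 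Since horizontality makes $\partial_i$ of any horizontal differential a single-valued second-kind differential with vanishing $a$-periods, the difference $\partial_T\phi - \partial_X(D_E\Omega)$ is holomorphic with zero $a$-periods, hence zero; this is precisely \eqref{FFM}. Part (iv) then requires no flow at all: flatness of $h = \langle\phi,\Omega\rangle$ for $\hat\nabla^{(z)}$ is the vanishing of the deformed Hessian, i.e. $\hat\nabla_i\hat\nabla_j h = z\,\delta_{ij}u_i^{-1}\partial_i h$ in view of \eqref{2ndssprod}; comparing with \eqref{Hess} and using $D_E\Omega = k\Omega$, so that $\langle\phi, D_E\Omega\rangle = k h$, identifies the deformation value as exactly $z = k$.

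For the involutivity half of (ii) I would invoke Tsarev's criterion: two diagonal systems commute precisely when the quantities $\partial_j v_i/(v_j - v_i)$ coincide for $i\neq j$. Writing $\Theta = D_E\Omega$ and differentiating $v_i = \Theta(P_i)/\phi(P_i)$, the key lemma \eqref{gammaeq} yields $\partial_j\Theta(P_i) = v_j\,\partial_j\phi(P_i)$, whence $\partial_j v_i/(v_j - v_i) = \partial_j\log\phi(P_i)$, a quantity manifestly independent of $\Omega$. Thus all the flows mutually commute, the semi-Hamiltonian integrability condition reduces to the trivial commutation of mixed partials of $\log\phi(P_i)$ (consistent with the flatness of $g$), and the Hamiltonians Poisson-commute with respect to the nondegenerate bracket \eqref{2ndPoiss}. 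The harder part of (ii) is completeness in Tsarev's sense: I would show that the symmetry velocities realized by $I(\mathcal{M})$ exhaust the solution space of Tsarev's linear system, which amounts to surjectivity of the evaluation map $\Omega\mapsto(\Omega(P_1),\dots,\Omega(P_n))$ at a generic point -- equivalently, by \eqref{resfor}, that the gradients of the functions $\langle\phi,\Omega\rangle$ span the cotangent space. This richness of $H(\mathcal{M})$ is the main obstacle, and is exactly where the argument of \cite{Du-Diff} is imported.

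Finally (iii) is a direct computation once $D_E\phi = 0$. The Leibniz rule \eqref{LeibnitzE} gives $\partial_E h_2 = \langle\phi, D_E\Omega_2\rangle$ up to a constant; applying the flow $\partial_{T^1}$, expanding through the moduli with $\partial_{T^1}u_i = v_i^{(1)}\partial_X u_i$, and using \eqref{resfor} twice cancels the $\phi(P_i)$ factors to produce $\partial_{T^1}\partial_E h_2 = \sum_i (\partial_X u_i)\,(D_E\Omega_1)(P_i)(D_E\Omega_2)(P_i)/2 = \partial_X\langle D_E\Omega_1, D_E\Omega_2\rangle$. Since \eqref{resfor} shows the pairing to be symmetric up to an additive constant that $\partial_X$ annihilates, the same computation with $1\leftrightarrow2$ gives the mirror identity, establishing the tau-symmetry. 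I expect no difficulty here beyond bookkeeping; the one subtlety worth flagging throughout is that $\partial_T\phi$ and the horizontal derivatives act at fixed $\lambda$, which is what legitimizes the chain-rule manipulations in (i) and (iii).
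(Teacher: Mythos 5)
Your handling of (i), (iii) and (iv) is correct and in substance the paper's own argument: for (i) the paper runs the computation in the opposite direction (it expands \eqref{FFM} at the ramification points and identifies the resulting diagonal system with the Hamiltonian flow $\{H,u_i\}$ via the diagonal Hessian \eqref{Hess}), and your singularity-matching step — both sides are second-kind differentials with vanishing $a$-periods and identical principal parts at the $P_j$, using Proposition \ref{HT0T}(i) to ensure $D_E\Omega$ is horizontal — supplies exactly the uniqueness argument that makes the two readings of \eqref{FFM} equivalent. Your explicit verification of Tsarev's criterion, $\partial_j v_i/(v_j-v_i)=\partial_j\log\phi(P_i)$ via \eqref{gammaeq}, is a legitimate and welcome expansion of the paper's terse ``common Riemann invariants immediately imply involutivity.'' Your (iii) and (iv) coincide with the paper's computations (the paper phrases (iv) in the flat coordinates of $g$, you in canonical coordinates; both amount to $\langle\phi,D_E\Omega\rangle=kh$ inserted into \eqref{Hess}).

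The genuine gap is in the completeness half of (ii). You reduce completeness in Tsarev's sense to surjectivity of the evaluation map $\Omega\mapsto(\Omega(P_1),\dots,\Omega(P_n))$ at a generic point. That is an $n$-dimensional, pointwise condition, whereas Tsarev completeness requires $I(\mathcal{M})$ to be \emph{dense} in the solution space of the linear system \eqref{llinsys}, $\partial_i\psi_j=\gamma_{ij}\psi_i$ for $i\neq j$, which is infinite-dimensional — locally parametrized by $n$ arbitrary functions of one variable. Pointwise spanning of the cotangent space yields only $n$ independent conservation laws, i.e.\ a coordinate system, and is nowhere near enough for the generalized hodograph method to produce the general local solution. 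Nor can the argument of \cite{Du-Diff} be ``imported'' as is: there the metric $\eta$ is flat, the rotation coefficients are translation invariant ($\partial_e\gamma_{ij}=0$), and density is obtained from the $\partial_e$-spectral decomposition; on a double Hurwitz space $\eta$ is in general curved and $\gamma_{ij}$ is not translation invariant. The paper's actual device, which is the idea missing from your proposal, is to replace translation invariance by the Euler homogeneity $\partial_E\gamma_{ij}=-\gamma_{ij}$: one augments \eqref{llinsys} to the system \eqref{llinsys2} with the extra equation $\partial_E\psi_i=z\psi_i$, observes that each $n$-dimensional graded piece $H_k(\mathcal{M})$ furnishes $n$ independent solutions at the integer value $z=k$ of the spectral parameter, and then restricts the resulting integer-degree homogeneous solutions to an integral curve of $\partial_E$, where they form a dense family of Cauchy data for \eqref{llinsys}. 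Without this use of the integer grading of $H(\mathcal{M})$, the completeness claim remains unproved.
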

\begin{proof} $\;$\\
\indent (i) Take the leading term of the expansion of \eqref{FFM} at a ramification point $P_i$ and use the proposition:
\begin{align*}
\partial_Tu_i	&=\frac{(D_E\Omega)(P_i)}{\phi(P_i)}\partial_Xu_i=\frac{2u_i}{\phi(P_i)^2}\frac{\partial_i\langle\phi, D_E\Omega\rangle)}{2u_i}\partial_Xu_i=\\
			&=g_i^{-1}\hat{\nabla}_i\hat{\nabla}_i\langle\phi, \Omega\rangle\;\partial_Xu_i=\{H, u_i\}
\end{align*}
\indent (ii) The last computation shows that  the canonical coordinates are common Riemann invariants for the Hamiltonian flows of the functionals \eqref{locfun}, which immediately implies the involutivity statement. The completeness statement can be proven using the same method of \cite{Ts}: we have to show that $I(\mathcal{M})$ is dense in the space of solutions of the diagonality equation \eqref{diag}. The map
$$
h\;\;\to\;\; (\psi_1, \dots, \psi_n), \;\; \psi_i\doteq \phi(P_i)^{-1}\frac{\partial h}{\partial u_i}
$$
turns \eqref{diag} into the system
\begin{equation}\label{llinsys}
\partial_i\psi_j=\gamma_{ij}\psi_i\qquad i\neq j
\end{equation}
where $\gamma_{ij}$ are the rotation coefficients of the metric $\eta$. The solutions are locally parametrized by $n$ functions of one variable. Since the metric $\eta$ is not flat, the rotation coefficients are not translation invariant; instead we can use the homogeneity $\partial_E\gamma_{ij}=-\gamma_{ij}$ to complete \eqref{llinsys} to
\begin{equation}\label{llinsys2}
\left\{\begin{array}{l}
\partial_i\psi_j=\gamma_{ij}\psi_i\qquad i\neq j\\
\partial_E\psi_i = z\psi_i\\
\end{array}\right.
\end{equation}
where $z$ is a spectral parameter.
Taking this basis of $H_k(\mathcal{M})$ (see \eqref{Hk}) and the corresponding densities in $I(\mathcal{M})$, the above procedure will produce $n$ independent solutions of \eqref{llinsys2} for $z=k$. Doing this for each $k$ one constructs a full family of homogeneous, integer degree solutions of \eqref{llinsys}; restricting them to an integral curve of $\partial_E$ we obtain a basis of continuous vector-valued functions on the curve, which are suitable Cauchy data for \eqref{llinsys}. This proves completeness of $I(\mathcal{M})$.\\
\indent (iii) For $D_E\phi=0$, we have
\begin{align*}
\partial_{T^1}\partial_Eh_2	&=\partial_{T^1}\langle\phi, D_E\Omega_2\rangle=\sum_{i=1}^n\partial_i\langle \phi, D_E\Omega_2\rangle \partial_{T^1}u_i=\\
						&=\sum_{i=1}^n\frac{(D_E\Omega_1)(P_i)(D_E\Omega_2)(P_i)}{2}\partial_Xu_i=\sum_{i=1}^n\partial_i\langle D_E\Omega_1, D_E\Omega_2\rangle\partial_Xu_i=\\
						&=\partial_X\langle D_E\Omega_1, D_E\Omega_2\rangle
\end{align*}
The last expression is clearly symmetric under the exchange of 1 and 2.\\
\indent (iv) Rewriting \eqref{Hess} in the flat coordinates $p^a$ of the intersection form and using $\langle \phi, D_E\Omega\rangle=kh$ we obtain
\begin{equation}\label{twistedk}
\partial_a\partial_b h=k\hat{c}_{ab}^{d}\partial_dh
\end{equation}
where $\hat{c}_{ab}^c$ are the structure constants of $\,\star$.  The last equation is  just $\hat{\nabla}^{(z=k)}dh =0$.
\end{proof}
Let us fix for each $k$ a basis $\psi_{1, k}, \dots, \psi_{n, k}$ of $H_k(\mathcal{M})$. The hierarchy of quasi-linear PDEs
\begin{equation}\label{WhithamH}
\partial_{T^{a, k}}\phi=\partial_X\psi_{a, k}\qquad a=1, \dots, n, \;\; k\in\mathbb{Z}\setminus 0
\end{equation}
is called the \emph{Whitham-type hierarchy} associated to the double Hurwitz space $\mathcal{M}$. Each of the flows is hamiltonian with respect to \eqref{2ndPoiss}, with hamiltonian density $h_{a, k}\doteq\frac{1}{k}\langle \phi, \psi_{a, k}\rangle$.  Note that the level $k=0$ is excluded in \eqref{WhithamH} since the functions $\langle\phi, \psi_{a, 0}\rangle$ are densities of Casimirs of \eqref{2ndPoiss} (they are flat coordinates of $g$).
\begin{Rem} For $d=1$ equation \eqref{twistedk} may be rewritten as
$$
\partial_a\partial_b h=\hat{c}_{ab}^{d}\partial_d\partial_Eh
$$
which is the usual defining equation for the hamiltonian densities of the principal hierarchy \cite{Du-2dtft, DuZh-Norm} of a Frobenius manifold (recall that here $\partial_E$ plays the role of the unit vector field). There is however a structural difference here in the fact that the unit vector field acts as a diagonal operator ($\partial_E h_{a, k}=kh_{a, k}$) on the basis, as opposed to a recursion operator. This is due to the fact that in the construction of the principal hierarchy the hamiltonian densities are obtained as coefficients of the series expansion at $z=0$ of the deformed flat coordinates, while in \eqref{WhithamH} they are deformed flat coordinates evaluated at integer values of the deformation parameters. In the setting of double Hurwitz spaces (where the flat coordinates being deformed are morally those of the intersection form) the second way of defining the flows appears to be more natural: e.g. it admits a Flaschka-Forest-McLaughlin representation, while the first doesn't.\\
\end{Rem}
A rich family of solutions of \eqref{WhithamH} can be found via the generalized hodograph method: the solution is defined implicitly by the equations
\begin{equation}\label{hodograph}
\left.\left(\sum_{k\neq0 }\sum_{a=1}^n\tilde{T}^{a, k}\psi_{a, k}+X\phi\right)\right|_{P_i}=0\qquad i=1, \dots, n
\end{equation}
where $\tilde{T}^{a, k}\doteq T^{a, k}-C^{a, k}$ and $C^{a, k}$ are arbitrary constants which parametrize the family. For $d=1$, one easily checks that the tau-function of \eqref{hodograph} is given by
\begin{equation}\label{Whithamtau}
\log\tau(X, T)=\frac{1}{2}\sum \tilde{T}^{a, k}\tilde{T}^{b, l}\langle\psi_{a, k}, \psi_{b, l}\rangle+X\sum \tilde{T}^{a, k}\langle\phi,\psi_{a, k}\rangle+\frac{1}{2}X^2\langle\phi, \phi\rangle 
\end{equation}
\section{Examples}
\subsection{The dispersionless q-deformed Gelfand-Dickey hierarchy}
In \cite{Fr-Def, FrReSTS} the authors define a difference version of the Gelfand-Dickey hierarchy, based on a q-deformation of the classical $\mathcal{W}$-algebra of $sl(n)$. The Lax operator has the form
$$
L = \Lambda^{n+1}+\alpha_1\Lambda^{n}+\dots+\alpha_n\Lambda+\alpha_{n+1}
$$
where $\alpha_i=\alpha_i(l)$ are function of the discrete spatial variable $l\in\mathbb{Z}$ and $\Lambda$ is the difference operator\footnote{Equivalently, we may view the dependent variables $\alpha_i$ as functions of the ``slow" spatial variable $X=\epsilon l$ and $\Lambda$ as the shift operator $\Lambda\alpha(X)=\alpha(X+\epsilon)$. In the original papers the authors use the multiplicative version $\Lambda \alpha(z)=\alpha (q z)$; the additive and multiplicative formulations are identified by setting $z=e^X, \; q=e^\epsilon$.}
\begin{equation}\label{Lambda}
\Lambda \alpha(l)=\alpha(l+1)
\end{equation}
The equations of the hierarchy are given by the usual formulas
\begin{equation}\label{q-defGD}
\frac{\partial}{\partial t^p}L=[L, (L^{p/(n+1)})_+]\qquad p=1, 2, \dots, \;\;p\neq k(n+1)
\end{equation}
where $L^{1/(n+1)}$ is the unique pseudo-difference operator $\Lambda + \beta_0+\beta_{-1}\Lambda^{-1}+\dots$ such that $(L^{1/(n+1)})^{n+1}=L$ and the subscript $+$ denotes the polynomial part in $\Lambda$. It is proved in \cite{Fr-Def} that the equations \eqref{q-defGD} have bi-hamiltonian structure and commute pairwise.\\
A straightforward computation shows that \eqref{q-defGD} implies $\partial\alpha_{n+1}/\partial t^p=0$, so the hierarchy is naturally reduced to the submanifold $\alpha_{n+1}=\text{const}$; let us set it to zero. The reduced Lax operator is
\begin{equation}\label{redL}
L = \Lambda^{n+1}+\alpha_1\Lambda^{n}+\dots+\alpha_n\Lambda
\end{equation}
By definition, the dispersionless limit is obtained by setting $X=\epsilon l, T^p=\epsilon t^p$ and taking the $\epsilon\to 0$ limit in all the equations. An equivalent approach consists in taking the semi-classical limit of the Lax formalism: one substitutes $L$ with its symbol
\begin{equation}\label{FRsymbol}
L\to\lambda= z^{n+1}+\alpha_1 z^{n}+\dots+\alpha_n z
\end{equation}
and the commutator with the classical Poisson bracket
\begin{equation}\label{clPB}
[L, M]\to \{\lambda, \mu\}=z\frac{\partial\lambda}{\partial z}\frac{\partial\mu}{\partial X}-z\frac{\partial\mu}{\partial z}\frac{\partial\lambda}{\partial X}
\end{equation}
where $L, M$ are arbitrary difference operators and $\lambda, \mu$ the corresponding symbols. Here in short we substituted $\epsilon\partial_X\to p$ and put $z=e^p$.\\
This approach leads immediately to the relevant Hurwitz space: the symbol of the Lax operator is to be identified with the superpotential, and the quasi-momentum with the exact differential $dp$. Here the symbol \eqref{FRsymbol} corresponds to
$$
g=0,\; \nu=(1, \dots, 1),\; \mu=(n+1)
$$ and the quasi-momentum is the third kind differential
\begin{equation}\label{FRphi}
\phi\doteq dp\equiv\frac{dz}{z}
\end{equation}
with simple poles at the pole of $\lambda$ and at one of the zeros. The coefficients $\alpha_1, \dots, \alpha_n$ are global coordinates on the Hurwitz space. Here as usual we exclude the points where the polynomial is not sufficiently generic; specifically we assume that neither $\lambda$ nor $\lambda'$ have multiple roots.\\
The dispersionless q-difference Gelfand-Dickey hierarchy can be rewritten in the Flaschka-Forest-McLaughlin form
$$
\partial_{T^p}\phi=\partial_X\Omega^{(k)}_{1, s}\qquad p=k(n+1)+s
$$
\indent In the standard formulation of the Gelfand-Dickey hierarchy, the differential operator $\partial_x$ replaces $\Lambda$ in the Lax operator. The dispersionless limit is described by the same Hurwitz space, but with a different quasi-momentum, namely
$$
\phi=dz
$$
From the point of view of the underlying Frobenius structures (see Theorem \ref{Frobst}), this is a substantial difference: on one hand $dz$ belongs to $H_T(\mathcal{M})$ and thus leads to a conformal Frobenius structure on $\mathcal{M}$ with metric $\eta$ (the standard Frobenius structure on the orbit space of the reflection group $A_n$). On the other hand $dz/z$ does not belong to $H_T(\mathcal{M})$, so there is no conformal Frobenius structure associated to the first metric. What we do have is a dual-type structure of charge one (i.e. non-conformal) on the ``almost-dual" side, since $dz/z\in H_0(\mathcal{M})$. Note however that the metric $\eta$ is still flat (which is in agreement with the bihamiltonian nature of the q-difference version), as this property only depends on the Hurwitz space and not on the choice of the quasi-momentum. As a matter of fact, in the change of quasi-momentum from $dz$ to $dz/z$, the only axiom that is lost at the level of the first structure is flatness of the unit.\\
Let us illustrate these facts in the specific case $n=3$. The superpotential is 
$$
\lambda=z^4+\alpha_1z^3+\alpha_2z^2+\alpha_3z
$$
and the quasi-momentum is \eqref{FRphi}. We begin by considering the first structure; according to Corollary \ref{flatcoords} and Proposition \ref{HT0T}, a basis of flat coordinates of $\eta$ is given by
\begin{align*}
&\langle \phi, \Omega^{(0)}_{1, 1}\rangle=-\text{Res}_{z=\infty}\lambda^{1/4}\phi=\frac{1}{4}\alpha_1\\
&\langle \phi, \Omega^{(0)}_{1, 2}\rangle=-\frac{1}{2}\text{Res}_{z=\infty}\lambda^{1/2}\phi=\frac{1}{16}(4\alpha_2-\alpha_1^2)\\
&\langle \phi, \Omega^{(0)}_{1, 3}\rangle=-\frac{1}{3}\text{Res}_{z=\infty}\lambda^{3/4}\phi=\frac{1}{384}(5\alpha_1^3-24\alpha_1\alpha_2+96\alpha_3)
\end{align*}
Adjusting slightly the normalizations, we set
\begin{align*}
&t_1=-\frac{1}{4}\alpha_1\\
&t_2=\frac{1}{8}\alpha_1^2-\frac{1}{2}\alpha_2\\
&t_3=-\frac{5}{96}\alpha_1^3+\frac{1}{4}\alpha_1\alpha_2-\alpha_3
\end{align*}
The superpotential in terms of the $t_\alpha$ takes the form
$$
\lambda=z^4-4t_1z^3+\Big(4t_1^2-2t_2\Big)z^2-\left(\frac{2}{3}t_1^2-2t_2t^3+t_3\right)z
$$
Using \eqref{super1st} one easily computes the metric $\eta$ and the structure constants of $\cdot$ in the coordinates $t_\alpha$. We find
$$
\eta=\left(\begin{array}{ccc}
0 &0 &1\\
0 &1 &0\\
1 &0 &0\end{array}\right)
$$
and the prepotential
\begin{align*}
F(t_1, t_2, t_3)= -	&\left(\frac{1}{4}t_3^2t_2+\frac{1}{2}t_3t_2^2t_1+\frac{1}{3}t_3t_2t_1^3+\frac{1}{4}t_3^2t_1^2+\frac{1}{30}t_3t_1^5\right.\\
		&\left.+\frac{1}{12}t_2^4+\frac{1}{3}t_2^3t_1^2+\frac{1}{3}t_2^2t_1^4+\frac{1}{9}t_2t_1^6+\frac{1}{63}t_1^8\right)
\end{align*}
It is quasi-homogeneous of degree $8/3$ with respect to the Euler vector field
$$
\partial_E =t_3\frac{\partial}{\partial t_3}+\frac{2}{3}t_2\frac{\partial}{\partial t_2}+\frac{1}{3}t_1\frac{\partial}{\partial t_1}
$$
It does not satisfy the flatness of the unit axiom; indeed, the unit has the form
$$
\partial_e=\frac{1}{3t_3-6t_2t_1+2t_1^3}\left(6(t_2-t_1^2)\frac{\partial}{\partial t_3}+6t_1\frac{\partial}{\partial t_2}-3\frac{\partial}{\partial t_1}\right)
$$
The denominator is a factor in the discriminant of the polynomial $\lambda$, and hence does not vanish on $\mathcal{M}$ by assumption.\\
Let us now consider the second structure.  Again from Corollary \ref{flatcoords} and Proposition \ref{HT0T} we have that the functions
\begin{align*}
&\langle \phi, \Phi^{(0)}_2\rangle=\text{p.v.}\int_{z_1}^{z_2}\phi=\log z(z_2)-\log z(z_1)\\
&\langle \phi, \Phi^{(0)}_3\rangle=\text{p.v.}\int_{z_1}^{z_3}\phi=\log z(z_3)-\log z(z_1)\\
&\langle \phi, \Phi^{(0)}_4\rangle=\text{p.v.}\int_{z_1}^{z_4}\phi=\log z(z_4)-\log z(z_1)
\end{align*}
form a basis of flat coordinates of $g$. Since $z(z_1)=0$ by our choice of the uniform parameter $z$, the terms $\log z(z_1)$ only contribute an additive constant. Let us set
\begin{align*}
&p_1=4\log z(z_2)\\
&p_2= \log z(z_3)-\log z(z_2)\\
&p_3=\log z(z_4)-\log z(z_2)
\end{align*}
In terms of the coordinates $p_a$ the superpotential becomes
$$
\lambda= z(z-e^{p_1/4})(z-e^{p_1/4+p_2})(z-e^{p_1/4+p_3})
$$
Substituting in \eqref{super2nd} we compute the intersection form in the coordinates $p_a$, 
$$
g=-\left(\begin{array}{ccc}
3/4 &1 &1\\
1 &2 &1\\
1 &1 &2
\end{array}\right)
$$
and the almost-dual prepotential
\begin{align*}
\hat{F}(p_1, p_2, p_3)= 	-&\left(\frac{1}{8}p_1^3+\frac{1}{2}p_1^2p_2+\frac{1}{2}p_1^2p_3+p_1p_2^2+p_1p_2p_3\right.\\
				&\left.+p_1p_3^2+p_2^2p_3+\frac{2}{3}p_2^3+\frac{1}{2}p_2p_3^2+\frac{5}{6}p_3^3\right)\\
				&+\text{Li}_3(e^{p_2})+\text{Li}_3(e^{p_3})+\text{Li}_3(e^{p_2-p_3})
\end{align*}
where $\text{Li}_3$ stands for the tri-logarithmic function
\begin{equation}\label{trilog}
\text{Li}_3(x)\doteq\sum_{k\geq 1}\frac{x^k}{k^3}
\end{equation}
It is readily checked that
$$
\frac{\partial^3\hat{F}}{\partial p_1\partial p_a\partial p_b}=g_{ab}
$$
so that $\partial/\partial p_1$ is the unit of the algebra. Clearly, $\hat{F}$ is not quasi-homogeneous because of the presence of the tri-logarithmic part.

\subsection{The Whitham-averaged sine-Gordon equation}
The sine-Gordon equation
\begin{equation}\label{sineG}
(\partial_t^2-\partial_x^2)U+\sin U=0
\end{equation}
is a well-known example of integrable equation by the inverse scattering method. The construction of $g$-phase sine-Gordon solutions and the theory of their slow modulation (spectral theory, invariant representation and hamiltonian formalism) were developed in the eighties by Forest, McLaughlin and collaborators \cite{FoML-SinG1,FoML-SinG2,EFMM}. We recall here some of the main results from their work, referring to the original papers for details.\\
For all positive $g$, \eqref{sineG} possesses a family of exact $g$-phase solutions of the form
\begin{equation}\label{g-phaseSG}
U(x, t)=W_g(\theta_1(x, t), \dots, \theta_g(x, t); u_1, \dots, u_{2g})
\end{equation}
where the function $W_g$ is periodic in each of the arguments $\theta_i$, each $\theta_i$ depends linearly on $x$ and $t$, and the variables $u_1, \dots, u_{2g}$ are parametrize the family. More precisely, the solution $W_g$ is explicitly expressed in terms of the theta function on the Jacobian of the spectral curve
\begin{equation}\label{sineGcurve}
\mu^2=\lambda(\lambda-u_1)\cdots(\lambda-u_{2g})
\end{equation}
The wave numbers and frequencies (which describe the dependence of the phases on space and time) are also uniquely determined by the curve. It follows that the $g$-phase solutions of the sine-Gordon equation are parametrized by the Hurwitz space $\mathcal{M}$ of type
$$
g,\; \mu=(2),\; \nu=(2)
$$
which is by definition the moduli space of hyperelliptic curves \eqref{sineGcurve}. Physical solutions of sine-Gordon can be locally approximated by slow modulation of the exact solutions \eqref{g-phaseSG}, i.e. allowing a dependence $u_i=u_i(X, T)$ of the parameters on the slow space-time variables $X=\epsilon x, T=\epsilon t$; such dependence is described by the Whitham equations, which can be expressed compactly in the Flaschka-Forest-McLaughlin form
\begin{equation}\label{With}
\partial_T\Omega^{(-)}=\partial_X\Omega^{(+)}
\end{equation}
where $\Omega^{(\pm)}$ are the horizontal differentials on $\mathcal{M}$ with double poles at $\lambda=0, \lambda=\infty$ and principal parts
$$
\Omega^{(\pm)}=\left\{
\begin{array}{l}
\left(\frac{1}{\tau^2}+O(1)\right)d\tau\qquad\text{at}\;\lambda=\infty, \;\tau=\lambda^{-1/2}\\
\\
\left(\pm\frac{1}{16}\frac{1}{\tau^2}+O(1)\right)d\tau\qquad\text{at}\;\lambda=0,\;\tau=\lambda^{1/2}
\end{array}\right.
$$
Corollary 2 implies that \eqref{With} is Hamiltonian with respect to the hydrodynamic bracket \eqref{2ndPoiss}, with the quasi-momentum differential 
$$
\phi={\Omega^{(-)}}
$$
and the hamiltonian density
$$
h=-2\langle \Omega^{(-)}, \Omega^{(-)}\rangle
$$
(indeed, $D_E\Omega^{(-)}=-\Omega^{(+)}/2$).\\
\indent
Let us consider the case $g=1$ in detail. The Hurwitz space is the 2-dimensional moduli space of hyperelliptic curves
\begin{equation}
\mu^2=\lambda(\lambda-u_1)(\lambda-u_2)
\end{equation}
In the Weierstrass uniformization,
$$
\lambda=\wp(z;\omega_1, \omega_2)-e_3,\qquad\mu=\frac{1}{2}\wp'(z;\omega_1, \omega_2)
$$
$$
u_i = e_i-e_3, \qquad i=1,2
$$
where $\wp$ is the Weierstrass elliptic function with periods $\omega_1, \omega_2$ and $e_i = \wp(\omega_i/2)$, $i=1, 2, 3$ ($\omega_3 = \omega_1 + \omega_2$).\\
Because the differential $\Omega^{(-)}$ is not homogeneous, using it as the quasi-momentum only produces a flat metric $g$, but not a dual-type Frobenius structure on $\mathcal{M}$ (i.e. neither the flatness of the unit nor the quasi-homogeneity axiom are satisfied by the associated prepotential). We will consider instead the dual-type Frobenius structure associated to the normalized holomorphic differential
\begin{equation}\label{qmom}
\phi\doteq\frac{dz}{\omega_1}=\frac{d\lambda}{2\mu}
\end{equation}
The general Flaschka-Forest-McLaughlin flow has the form
$$
\partial_T\phi=\partial_X\Omega\qquad\Omega\in H(\mathcal{M})
$$
The original equation \eqref{With} is recovered from the pair of flows
$$
\partial_{T^{(-)}}\phi=\partial_X\Omega^{(-)}\qquad \partial_{T^{(+)}}\phi=\partial_X\Omega^{(+)}
$$
eliminating $X$ and interpreting $T^{(+)}$ as the new spatial variable.\\
Let us compute the almost-dual prepotential $\hat{F}$ associated to the choice \eqref{qmom}.\\
The intersection form reads
\begin{equation}\label{gsg}
g=\frac{\phi(\omega_1/2)^2}{2u_1}(du_1)^2+\frac{\phi(\omega_2/2)^2}{2u_2}du_2^2=\frac{1}{2\omega_1^2(u_1-u_2)}\left(\frac{du_1^2}{u_1^2}-\frac{du_2^2}{u_2^2}\right)
\end{equation}
A basis of $H_0(\mathcal{M})$ is provided by $\phi$ itself plus the multivalued holomorphic differential $\rho_1^{(0)}$, with the jump
$$
\rho_1^{(0)}(P+\omega_2)-\rho_1^{(0)}(P)=\frac{d\lambda}{\lambda}
$$
and vanishing $a$-period. Therefore a basis of flat coordinates for $g$ is obtained from the pairings
$$
\langle\phi, \phi\rangle \propto \oint_b\frac{dz}{\omega_1},\qquad\langle \phi, \rho_1^{(0)}\rangle\propto\oint_a\log\lambda\frac{dz}{\omega_1}
$$
The first integral is just the modular parameter $\omega_2/\omega_1$. To compute the second, we introduce the Abelian integral\footnote{Here and in the rest of this section $\eta_i=\zeta(\omega_i/2)$, where $\zeta$ is the Weierstrass zeta function.}
$$
q(z)\doteq -\zeta(z) +\frac{2\eta_1}{\omega_1}z
$$
with periods
$$
q(z+\omega_1)-q(z)=0\qquad q(z+\omega_2)-q(z)=\frac{2\pi i}{\omega_1}
$$
Let $\tilde{C}$ be obtained by cutting the elliptic curve along the $a$ and $b$ cycles plus an additional path joining $\lambda=\infty$ ($z=0$) and $\lambda=0$ ($z=\omega_3/2$). Let $\gamma$ be a small loop around this additional cut, and set $\psi\doteq\log\lambda dz/\omega_1$. Then $q(z)\psi(z)$ is a holomorphic single-valued differential on $\tilde{C}$, so that
 \begin{align*}
 0	&=\int_{\partial\tilde{C}}q\psi=\left(\oint_a-\oint_{a'}\right)q\psi+\left(\oint_b-\oint_{b'}\right)q\psi- \oint_\gamma q\psi=\\
 	&=\oint_{z\in a}\psi(z)\Big(q(z)-q(z+\omega_2)\Big) +\oint_{z\in b}\psi(z)\Big(q(z)-q(z-\omega_1)\Big)-\oint_\gamma q\psi
 \end{align*}
 from which
 \begin{equation}
 \oint_a\log\lambda\frac{dz}{\omega_1}=-\frac{\omega_1}{2\pi i}\oint_\gamma q\psi
 \end{equation}
 The right hand side equals
 \begin{equation}\label{pvres}
-\frac{\omega_1}{2\pi i} \oint_\gamma q\psi=\;\text{p.v.}\int_{z=0}^{z=\omega_3/2}q(z)dz+\pi i\;\text{Res}_{z=0}\;q(z)dz
 \end{equation}
where as usual the meaning of the principal value is to subtract the divergent part of the integral in the canonical local parameters at the end-points. The residue term in the right hand side of \label{pvres} is just a constant since $\text{Res}_{z=0}\;\zeta(z)dz=1$. The principal value is
$$
\text{p.v.}\int_{0}^{\omega_3/2}\left(-\zeta(z)+\frac{2\eta_1}{\omega_1}z\right)dz=\text{p.v.}\left[-\log \sigma(z)+\frac{\eta_1}{\omega_1}z^2\right]_0^{\omega_3/2}=
$$
$$
=-\log\sigma(\omega_3/2)+\text{p.v.}\log\sigma(z)\vert_{z=0}+\frac{\eta_1}{4\omega_1}(\omega_1+\omega_2)^2
$$
in terms of the Weierstrass $\sigma$-function. We have
\begin{itemize}
\item $\text{p.v.}\log\sigma(z)\vert_{z=0}=0$, since
\begin{align*}
&z=\tau+O(\tau^3)\qquad \text{at}\; z=0, \;\tau=\lambda^{1/2}\\
&\sigma(z)=z+O(z^5)\qquad\text{at}\; z=0\\
\Rightarrow &\log\sigma=\log\tau+ \log(1+O(\tau^2))
\end{align*}
\item $\log\sigma(\omega_3/2)=1/4(\eta_1+\eta_2)(\omega_1+\omega_2)-1/4\log u_1u_2$. This can be obtained from the addition formula
$$
\wp(u)-\wp(v)=-\frac{\sigma(u+v)\sigma(u-v)}{\sigma^2(u)\sigma^2(v)},
$$
evaluating at $(u=\omega_1/2, v=\omega_3/2)$ and at $(u=\omega_2/2, v=\omega_3/2)$, using the periodicity properties of $\sigma$, and finally taking the product and then the logarithm.
\end{itemize}
Substituting in the previous formula, we finally obtain
$$
\text{p.v.}\int_0^{\omega_3}q(z)dz=\frac{1}{4}\left(\log u_1u_2+i\pi+i\pi\frac{\omega_2}{\omega_1}\right)
$$
\\
We conclude that the functions
\begin{equation}
p_1\doteq\frac{1}{2}\log u_1u_2\qquad p_2\doteq\frac{1}{2\pi i}\frac{\omega_2}{\omega_1}
\end{equation}
are flat coordinates of the intersection form \eqref{gsg}.
\begin{Lemma} The formulas
$$
\frac{\partial}{\partial u_1}\omega_1=-\frac{1}{2u_1(u_1-u_2)}(2\eta_1+e_1\omega_1)\qquad\frac{\partial}{\partial u_2}\omega_1=-\frac{1}{2u_2(u_2-u_1)}(2\eta_1+e_2\omega_1)
$$
$$
\frac{\partial}{\partial u_1}\omega_2=-\frac{1}{2u_1(u_1-u_2)}(2\eta_2+e_1\omega_2)\qquad\frac{\partial}{\partial u_2}\omega_2=-\frac{1}{2u_2(u_2-u_1)}(2\eta_2+e_2\omega_2)
$$
hold true.
\end{Lemma}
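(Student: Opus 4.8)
The plan is to treat the periods as elliptic integrals over fixed homology cycles, differentiate with respect to the branch points, and then evaluate the resulting integrals in the Weierstrass uniformization. Writing $P(\lambda)=\lambda(\lambda-u_1)(\lambda-u_2)$, the periods are $\omega_i=\oint_{\gamma_i}\frac{d\lambda}{2\sqrt{P(\lambda)}}$, where $\gamma_1,\gamma_2$ are the $a$- and $b$-cycles (so that $\oint_{\gamma_i}dz=\omega_i$). Holding the cycles fixed away from the branch points, I would differentiate under the integral sign; since $\partial_{u_j}P=-P/(\lambda-u_j)$ one gets $\partial_{u_j}P^{-1/2}=\tfrac12(\lambda-u_j)^{-1}P^{-1/2}$, hence
\[
\frac{\partial\omega_i}{\partial u_j}=\frac14\oint_{\gamma_i}\frac{d\lambda}{(\lambda-u_j)\sqrt{P(\lambda)}}.
\]
The integrand has a double pole (a second-kind singularity) at the branch point $\lambda=u_j$, so the cycle must be chosen to avoid it; this is the only point requiring care in justifying the differentiation.

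Next I would pass to the Weierstrass coordinate via $\lambda=\wp(z)-e_3$, $\sqrt{P}=\mu=\tfrac12\wp'(z)$ and $\lambda-u_j=\wp(z)-e_j$. Since $d\lambda=\wp'(z)\,dz$, the $\lambda$-integral collapses to
\[
\frac{\partial\omega_i}{\partial u_j}=\frac12\oint_{\gamma_i}\frac{dz}{\wp(z)-e_j},
\]
reducing everything to the $a$- and $b$-periods of the meromorphic differential $dz/(\wp(z)-e_j)$, whose double poles sit at the half-period $z=\omega_j/2$ and its lattice translates.

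To evaluate these, I would start from the classical subtraction formula $\frac{\wp'(a)}{\wp(z)-\wp(a)}=\zeta(z-a)-\zeta(z+a)+2\zeta(a)$ and specialize $a\to\omega_j/2$. Because $\wp'(\omega_j/2)=0$ while the right-hand side also vanishes in the limit (by quasi-periodicity of $\zeta$), this is a $0/0$ indeterminacy; resolving it by L'H\^opital in $a$, using $\zeta'=-\wp$ together with $\wp''(\omega_j/2)=2(e_j-e_k)(e_j-e_l)$ for $\{j,k,l\}=\{1,2,3\}$, yields the key identity
\[
\frac{1}{\wp(z)-e_j}=\frac{\wp(z-\omega_j/2)+\wp(z+\omega_j/2)-2e_j}{2(e_j-e_k)(e_j-e_l)}.
\]
This degenerate limit is the main obstacle of the argument; everything else is bookkeeping.

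Finally I would integrate term by term over $\gamma_i$. Each translated Weierstrass function has the same cycle period as $\wp$ itself, and $\oint_{\gamma_i}\wp\,dz=-2\zeta(\omega_i/2)=-2\eta_i$, while the constant term contributes $-2e_j\omega_i$. Hence
\[
\frac{\partial\omega_i}{\partial u_j}=\frac{-(2\eta_i+e_j\omega_i)}{2(e_j-e_k)(e_j-e_l)}.
\]
Substituting $e_j-e_3=u_j$ and $e_1-e_2=u_1-u_2$, so that $(e_1-e_2)(e_1-e_3)=u_1(u_1-u_2)$ and $(e_2-e_1)(e_2-e_3)=u_2(u_2-u_1)$, reproduces the four stated formulas. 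I would conclude by checking the signs and the pairing $\eta_i\leftrightarrow\omega_i$ directly in each of the cases $i,j\in\{1,2\}$ against the claimed expressions.
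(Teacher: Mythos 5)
Your proposal is correct and follows essentially the same route as the paper's proof: differentiate the period integral under the integral sign, pass to the Weierstrass uniformization to reduce everything to $\tfrac{1}{2}\oint_{\gamma_i}\frac{dz}{\wp(z)-e_j}$, and evaluate the periods via the identity $\frac{1}{\wp(z)-e_j}=\frac{\wp(z-\omega_j/2)-e_j}{(e_j-e_k)(e_j-e_l)}$ together with $\oint_{\gamma_i}\wp\,dz=-2\eta_i$. The only (inessential) difference is how that identity is obtained: the paper compares Laurent expansions of the two elliptic functions at their common double pole $z=\omega_j/2$, whereas you derive it from the $\zeta$-addition formula by a limit in the parameter $a$; both arguments are standard and valid.
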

\begin{proof}
For example the first one; realize the elliptic curve by gluing two copies of the complex plane along cuts drawn from $0$ to $u_1$ and from $u_2$ to $\infty$. Identifying the $a$-cycle with a path in the first sheet encircling the first cut, we have
\begin{align*}
\frac{\partial}{\partial u_1}\omega_1	&=\frac{\partial}{\partial u_1}\oint_a\frac{d\lambda}{2\mu}=\frac{\partial}{\partial u_1}\oint_a\frac{d\lambda}{2\sqrt{\lambda(\lambda-u_1)(\lambda-u_2)}}\\
						&=\frac{1}{2}\oint_a\frac{d\lambda}{2\mu(\lambda-u_1)}=\frac{1}{2}\oint_a\frac{dz}{\wp(z)-e_1}
\end{align*}
The integrand in the last expression is an elliptic function with only a double pole at $\omega_1/2$. Confronting the expansions at $\omega_1/2$, one checks the equality
$$
\frac{1}{\wp(z)-e_1}=\frac{1}{u_1(u_1-u_2)}(\wp(z-\omega_1/2)-e_1)
$$
which yields the result.
\end{proof}
From the lemma we compute
\begin{equation}\label{Jac}
dp_1=\frac{1}{2} \left(\frac{du_1}{u_1}+\frac{du_2}{u_2}\right)\qquad dp_2=\frac{1}{2\omega_1^2(u_1-u_2)}\left(\frac{du_1}{u_1}-\frac{du_2}{u_2}\right)
\end{equation}
We can now easily check the metric takes antidiagonal form in the coordinates $p^a$:
$$
g^{-1}(dp_a, dp_b)=\delta_{a+b, 3}
$$
Inverting \eqref{Jac} we obtain
\begin{equation}\label{invJac}
du_1=u_1dp_1+\omega_1^2u_1(u_1-u_2)dp_2
\end{equation}
$$
du_2=u_2dp_1-\omega_1^2u_2(u_1-u_2)dp_2
$$
The explicit transformation from the flat coordinates of the intersection form to the canonical coordinates can be derived using the identities for Jacobi theta functions\footnote{We adopt the following conventions for the Jacobi theta functions:
\begin{align*}
&\theta_1(z, \tau)=-i\sum_{n=-\infty}^{+\infty}(-1)^nq^{(n+1/2)^2}e^{(2n+1)z}&\qquad&\theta_2(z, \tau)=\sum_{n=-\infty}^{+\infty}q^{(n+1/2)^2}e^{(2n+1)z}\\
&\theta_3(z, \tau)=\sum_{n=-\infty}^{+\infty}q^{n^2}e^{2nz}&\qquad&\theta_4(z, \tau)=\sum_{n=-\infty}^{+\infty}(-1)^nq^{n^2}e^{2nz}
\end{align*}
where $q=e^{i\pi\tau}$ and $\tau = \omega_2/\omega_1.$}
\begin{align*}
	&e_1-e_3=\left(\frac{\pi}{\omega_1}\right)^2\theta_4^4(0, 2\pi ip_2)\\
	&e_2-e_3=-\left(\frac{\pi}{\omega_1}\right)^2\theta_2^4(0, 2\pi i p_2)\\
	&e_1-e_2=\left(\frac{\pi}{\omega_1}\right)^2\theta_3^4(0, 2\pi i p_2)
\end{align*}
which yield
\begin{equation}
u_1=ie^{p_1}\frac{\theta_4^2(0, 2\pi ip_2)}{\theta_2^2(0, 2\pi i p_2)}\qquad u_2=-ie^{p_1}\frac{\theta_2^2(0, 2\pi i p_2)}{\theta_4^2(0, 2\pi i p_2)}
\end{equation}
The structure constants of $\,\star\,$ are computed substituting \eqref{Jac}, \eqref{invJac} in 
$$
\hat{c}_{ab}^c=\sum_{i=1}^2\frac{1}{u_i}\frac{\partial p_c}{\partial u_i}\frac{\partial u_i}{\partial p_a}\frac{\partial u_i}{\partial p_b}
$$
which is just the definition \eqref{2ndssprod}. After lowering one index, we find
\begin{align*}
&\hat{c}_{111}=0&\qquad&\hat{c}_{112}=1\\
&\hat{c}_{122}=0&\qquad&\hat{c}_{222}=\omega_1^4(u_1-u_2)^2=\pi^4\theta_3^8(0, 2\pi ip_2)
\end{align*}
The prepotential reads
\begin{equation}
\hat{F}(p_1, p_2)=\frac{1}{2}p_1^2p_2+f(p_2)
\end{equation}
where $f$ is defined by the equation 
$$
f'''(p)=\pi^4\theta_3^8(0, 2\pi i p)
$$
\section*{Further developments}
An attractive feature of dual-type Frobenius structures on double Hurwitz space lies in their apparent involvement in the equivariant Gromov-Witten theory of local Calabi-Yau threefolds.\\
Brini, Carlet and Rossi showed in \cite{BrCaRo} that the Ablowitz-Ladik hierarchy can be realized as the simplest rational reduction of the 2D-Toda hierarchy. Its dispersionless limit is described by the dual-type structure on the genus zero Hurwitz space with $\mu=\nu=(1, 1)$, and quasi-momentum of the third kind with singularities at a pole and a zero of $\lambda$. It was first pointed out by Brini \cite{Br-Loc} that the corresponding prepotential coincides with the genus zero primary Gromov-Witten potential of the resolved conifold, a fact that led him to formulate a conjectural relation between the all genera Gromov-Witten theory of local $\mathbb{P}^1$ and the full Ablowitz-Ladik hierarchy.\\
The general rational reduction of 2D-Toda naturally defines a bi-graded version of the Ablowitz-Ladik hierarchy, whose dispersionless limit is described by the genus zero double Hurwitz space with $\mu =(M, 1,  \dots, 1), \nu=(N, 1, \dots, 1)$. In this case the prepotential of the dual-type structure is found to reproduce the primary Gromov-Witten invariants of local $\mathbb{P}^1$ with orbifold points of weights $N, M$.\\
This subject is part of a work currently in progress in collaboration with Brini, Carlet and Rossi.

\vspace{45pt}
\center{Stefano Romano}\\
\center{SISSA -- International School for Advanced Studies}\\
\center{Via Bonomea, 265 - 34136 Trieste ITALY}\\
\center{Email: \nolinkurl{sromano@sissa.it}}
\end{document}